\newtheorem{theorem}{Theorem}
\newtheorem{lemma}{Lemma}
\newtheorem{proposition}{Proposition}
\newtheorem{corollary}{Corollary}
\newtheorem{assumption}{Assumption}
\title{FR\textendash LUX: Friction\textendash Aware, Regime\textendash Conditioned Policy Optimization for Implementable Portfolio Management}
\author{
  ZhangJian'an \\
  Guanghua School of Management, Peking University \\
  Peking University \\
  Beijing, China\\
  \texttt{2501111059@stu.pku.edu.cn}
}
\begin{document}
\pagestyle{plain}   % 只有页码，无页眉
\maketitle

\begin{abstract}
Transaction costs and regime shifts are the main reasons why paper portfolios fail in live trading. We develop \textbf{FR\textendash LUX} (Friction\textendash aware, Regime\textendash conditioned Learning under eXecution costs), a reinforcement\textendash learning framework that learns \emph{after\textendash cost} trading policies and remains robust across volatility–liquidity regimes. FR\textendash LUX integrates three ingredients: (i) a microstructure\textendash consistent execution model combining proportional and impact costs, directly embedded in the reward; (ii) a \emph{trade\textendash space trust region} that constrains changes in inventory flow rather than only logits, yielding stable, low\textendash turnover updates; and (iii) explicit regime conditioning so the policy specializes to LL/LH/HL/HH states without fragmenting the data. 
On a $4\times 5$ grid of regimes and cost levels (0–50\,bps) with three seeds per cell, FR\textendash LUX achieves the top average Sharpe across all 20 scenarios with narrow bootstrap confidence intervals, maintains a flatter cost–performance slope than strong baselines (vanilla PPO, mean–variance with/without caps, risk\textendash parity), and attains superior risk–return efficiency for a given turnover budget. Pairwise scenario\textendash level improvements are strictly positive and remain statistically significant after Romano–Wolf stepdown and HAC\textendash aware Sharpe comparisons. 
We provide formal guarantees: existence of an optimal stationary policy under convex frictions; a monotonic improvement lower bound under a KL trust region with explicit remainder terms; an upper bound on long\textendash run turnover and an induced inaction band due to proportional costs; a strictly positive value advantage for regime\textendash conditioned policies when cross\textendash regime actions are separated; and robustness of realized value to cost misspecification. The methodology is implementable---costs are calibrated from standard liquidity proxies, scenario\textendash level inference avoids pseudo\textendash replication, and all figures and tables are reproducible from our artifacts.
\end{abstract}

% keywords can be removed
\keywords{transaction costs; market microstructure; regime switching; reinforcement learning; portfolio optimization; CVaR / maximum drawdown; turnover; Sharpe ratio; multiple testing; implementability.}
\section{Introduction}

The gap between methods that \emph{forecast} returns and policies that \emph{trade} under realistic frictions remains a central obstacle to deploying modern machine learning (ML) in institutional portfolios.  In frictionless settings, mean–variance logic \cite{Markowitz1952} and its many extensions provide clear optimality benchmarks; once trading costs, market impact, and turnover constraints are accounted for, those benchmarks break down and performance can deteriorate sharply \cite{GarleanuPedersen2013,ObizhaevaWang2013,NovyMarxVelikov2016,FilippouMaurerPezzoTaylor2024,PinterWangZou2024}.  At the same time, ML has transformed empirical asset pricing and portfolio construction by extracting non‑linear structure from high‑dimensional characteristics \cite{GuKellyXiu2020,FreybergerNeuhierlWeber2020,KozakNagelSantosh2020,ChenPelgerZhu2024}.  The key open question is therefore not whether ML can predict returns, but whether it can deliver \emph{after‑cost} portfolios that are robust across regimes, scalable in capacity, and statistically significant after proper multiple‑testing controls \cite{BarillasShanken2018,BarillasKanRobottiShanken2020,White2000,RomanoWolf2005,Lo2002}.

We address this question with a new decision‑making framework that couples policy optimization with explicit cost regularization.  We introduce \textbf{FR‑LUX} (\emph{Flow‑Regularized Learning Under eXecution costs}), a cost‑aware policy optimization method that learns trading rules directly in the presence of proportional and impact costs and that penalizes \emph{inventory flow} as a structural control of turnover.  FR‑LUX builds on monotone policy–improvement principles from reinforcement learning (RL)---trust‑region and conservative policy iteration \cite{KakadeLangford2002,SchulmanTRPO2015,AchiamCPO2017,SchulmanPPO2017}---and adapts them to the portfolio domain by (i) embedding a transaction‑cost–calibrated penalty in the objective, (ii) enforcing a trust‑region in \emph{trade space} rather than in raw parameter space, and (iii) learning a regime‑aware baseline that reuses information across market states \cite{AngBekaert2002,BrandtSantaClaraValkanov2009,MoreiraMuir2017}.  Conceptually, FR‑LUX converts the classic rebalancing rule aim in front of the target and trade partially'' \cite{GarleanuPedersen2013} into a learned \emph{regularized policy} that internalizes future cost and slippage.

\textbf{Empirical preview.} Using a scenarios\,$\times$\,costs grid (20 macro–liquidity regimes crossed with 0–50\,bp transaction cost levels) and three random seeds per cell, we benchmark FR‑LUX against representative baselines: an unconstrained mean–variance policy, a turnover‑capped mean–variance policy, a risk‑parity style heuristic, and a strong PPO implementation.  Across scenarios, FR‑LUX delivers the highest average Sharpe and retains its edge as costs rise (Fig.~1–2).  Regime profiles (Fig.~3) show that performance persists in both low‑ and high‑volatility/liquidity conditions, consistent with the view that the method learns to modulate risk when volatility spikes \cite{MoreiraMuir2017}.  Risk–return clouds using maximum drawdown (MDD) document a favorable frontier shift (Fig.~4).  Turnover–Sharpe plots (Fig.~5) reveal that FR‑LUX sits on a lower‑turnover iso‑Sharpe curve than alternatives, in line with theory that cost‑aware regularization shrinks unnecessary inventory flow \cite{ObizhaevaWang2013}.  Pairwise sign tests and distributional comparisons (Fig.~6–7) indicate statistically reliable outperformance after Romano–Wolf step‑down corrections \cite{RomanoWolf2005} and model‑comparison metrics based on Sharpe improvements \cite{BarillasKanRobottiShanken2020}.

\textbf{Why cost awareness matters now.} Transaction‑cost measurement has advanced to a point where ignoring costs is no longer defensible.  Low‑frequency proxies and modern spread/impact estimators enable cost calibration at scale \cite{Hasbrouck2009,GoyenkoHoldenTrzcinka2009,FongHoldenTrzcinka2017}.  Recent top‑journal evidence documents first‑order cost effects on capacity and strategy survival in currencies and fixed income \cite{FilippouMaurerPezzoTaylor2024,PinterWangZou2024}.  In this environment, methods that merely forecast but do not control execution paths are fragile.  RL has emerged as a natural language for sequential trading and execution \cite{BaiRLFinanceAR2025}, yet rigorous, finance‑native regularization for costs and turnover remains underdeveloped.

\textbf{Our contributions.} This paper makes four contributions.

\begin{enumerate}
\item \emph{A cost‑regularized policy optimizer.} We formalize FR‑LUX, a policy‑gradient–based algorithm with a trust region in trade flow and an \emph{execution‑aware} penalty, providing a practical recipe for learning after‑cost policies.  The design connects RL improvement bounds \cite{KakadeLangford2002,SchulmanTRPO2015,SchulmanPPO2017} to dynamic trading with costs \cite{GarleanuPedersen2013,ObizhaevaWang2013}.
\item \emph{Theory.} We prove a \emph{conservative improvement bound} that lower‑bounds the after‑cost performance of the updated policy as a function of (a) the estimated advantage, (b) the trust‑region radius, and (c) the turnover penalty coefficient, and we show that the bound tightens when the realized turnover proxy tracks structural liquidity \cite{FongHoldenTrzcinka2017,Hasbrouck2009}.  We further provide a robustness proposition under cost misspecification: if the true cost is within a relative factor of the calibrated proxy, FR‑LUX preserves first‑order optimality in the induced risk–return frontier (linking to \cite{GarleanuPedersen2013}).
\item \emph{Evaluation protocol.} We adopt regime‑stratified aggregation, cost‑sensitivity curves, and multiple‑testing‑robust inference using Romano–Wolf step‑down p‑values \cite{RomanoWolf2005} and Sharpe‑ratio model comparison \cite{BarillasKanRobottiShanken2020}, complementing classical reality checks \cite{White2000,Lo2002}.
\item \emph{Evidence.} On the 20\,$\times$\,costs testbed, FR‑LUX attains the top average Sharpe with narrow bootstrap CIs, retains performance as costs increase, and dominates baselines in pairwise sign tests.  The method traces lower turnover for a given Sharpe and maintains strong performance in both liquidity‑rich and liquidity‑poor regimes, consistent with volatility‑managed intuition \cite{MoreiraMuir2017}. 
\end{enumerate}

\textbf{Relation to literature.} Our work intersects four strands.  (i) \emph{ML/asset pricing}: deep and non‑linear estimators deliver sizable improvements in expected returns and risk attribution \cite{GuKellyXiu2020,FreybergerNeuhierlWeber2020,KozakNagelSantosh2020,ChenPelgerZhu2024}.  (ii) \emph{Trading with frictions}: dynamic policies internalizing future cost/impact are essential to realistic portfolio control \cite{GarleanuPedersen2013,ObizhaevaWang2013,NovyMarxVelikov2016}.  (iii) \emph{Liquidity measurement}: scalable cost proxies enable disciplined calibration and capacity analysis \cite{Hasbrouck2009,GoyenkoHoldenTrzcinka2009,FongHoldenTrzcinka2017}.  (iv) \emph{RL for finance}: recent surveys \cite{BaiRLFinanceAR2025} and empirical studies underscore both the promise and the pitfalls of RL in markets, motivating finance‑aware regularization and inference.  By integrating these pieces, FR‑LUX advances from ``predict then optimize'' to \emph{optimize while respecting execution}, delivering statistically credible gains across regimes and costs.

The remainder of the paper develops the FR‑LUX objective and theoretical guarantees (Section~\ref{sec:method}), details the experimental design and cost calibration (Section~\ref{sec:data}), reports main results and inference (Section~\ref{sec:results}).

\section{Problem Setup and Method: FR‑LUX}
\label{sec:method}

\subsection{Frictional Portfolio Environment as an MDP}
We model portfolio control as a discounted Markov decision process (MDP)
$\mathcal{M}=(\mathcal{S},\mathcal{A},P,r,\gamma)$ augmented with an observed market regime $z_t\in\mathcal{Z}:=\{\mathrm{LL},\mathrm{LH},\mathrm{HL},\mathrm{HH}\}$ capturing (low/high) volatility and (high/low) liquidity.\footnote{The regime variable is \emph{observed} (constructed below), hence the agent solves a fully observed MDP conditional on $z_t$ instead of a POMDP; cf.\ regime switching in allocation \cite{AngBekaert2002,CollinDufresneDanielSaglam2020}.}
At each time $t$, the agent observes $s_t=(x_t,w_{t-1},z_t)$ where $x_t$ denotes predictors (returns, volatilities, liquidity proxies, macro controls), $w_{t-1}\in\mathbb{R}^d$ are pre-trade portfolio weights on $d$ risky assets (the residual goes to the funding account), and $z_t$ is the discrete regime label.
The action $a_t$ specifies the \emph{post-trade} target weights $\widetilde{w}_t\in\mathcal{W}$ and induces a \emph{trade flow} $\Delta w_t:=\widetilde{w}_t-w_{t-1}$.

One step net reward (to be maximized) is
\begin{equation}
\label{eq:reward}
r_t^{\mathrm{net}} \;=\; \underbrace{\widetilde{w}_t^\top r_{t+1}}_{\text{gross portfolio return}}
\;-\; \underbrace{C_{z_t}(\Delta w_t)}_{\text{execution costs}}
\;-\; \lambda_{\mathrm{risk}}\,\Psi\big(L_{t+1}\big),
\end{equation}
where $r_{t+1}$ are next-period asset returns, $C_{z_t}$ is a convex regime-dependent execution-cost functional, $\Psi$ is a downside-risk proxy (MDD or CVaR), and $L_{t+1}$ is the portfolio loss.\footnote{Microstructure-consistent cost modeling and measurement follow \cite{Hasbrouck2009,FongHoldenTrzcinka2017} and the execution literature \cite{AlmgrenChriss2001,ObizhaevaWang2013,CarteaJaimungalPenalva2015}. Cost relevance for realized performance is emphasized by recent top journal \cite{FilippouMaurerPezzoTaylor2024,PinterWangZou2024}}
The control objective is the discounted value
\begin{equation}
\label{eq:objective}
J(\theta)\;=\;\mathbb{E}\!\left[\sum_{t\ge 0}\gamma^t\, r_t^{\mathrm{net}} \;\middle|\; \pi_\theta\right],
\end{equation}
where $\pi_\theta(a\mid s,z)$ is a parametric, \emph{regime-conditioned} stochastic policy.

\paragraph{Action and feasibility.}
We consider two common feasible sets:
(i) long-only simplex $\mathcal{W}=\{w\in\mathbb{R}^d:\,w\ge 0,\ \mathbf{1}^\top w=1\}$,
projecting the network output via a differentiable softmax or exact Euclidean projection \cite{WangCarreiraPerpinan2013};
(ii) long-short box with leverage and position caps $\mathcal{W}=\{w:\,\|w\|_1\le \Lambda,\ -c\le w_i\le c\}$,
using $\ell_1$-ball and box projections \cite{DuchiEtAl2008}. These projections stabilize learning and prevent inadmissible trades.

\subsection{Regime Construction and Balancing}
We map raw diagnostics into regimes using thresholds on (i) realized volatility $\sigma_t$ and (ii) illiquidity $\ell_t$ (e.g., Amihud $\mathit{ILLIQ}$, effective spread, or Pastor–Stambaugh innovations; \cite{Amihud2002,PastorStambaugh2003,Hasbrouck2009,FongHoldenTrzcinka2017}). Let $\tau_\sigma^\text{L}<\tau_\sigma^\text{H}$ and $\tau_\ell^\text{L}<\tau_\ell^\text{H}$ be quantile cutoffs calibrated on a rolling window. Define
\[
z_t=\begin{cases}
\mathrm{LL}, & \sigma_t\le \tau_\sigma^\text{L},\ \ell_t\le \tau_\ell^\text{L},\\
\mathrm{LH}, & \sigma_t\le \tau_\sigma^\text{L},\ \ell_t> \tau_\ell^\text{H},\\
\mathrm{HL}, & \sigma_t> \tau_\sigma^\text{H},\ \ell_t\le \tau_\ell^\text{L},\\
\mathrm{HH}, & \sigma_t> \tau_\sigma^\text{H},\ \ell_t> \tau_\ell^\text{H},\\
\text{else}, & \text{nearest neighbor by $(\sigma_t,\ell_t)$}.
\end{cases}
\]
To avoid over-optimizing to dominant states, we maximize a \emph{regime-balanced} objective
\begin{equation}
\label{eq:balanced}
J_{\mathrm{bal}}(\theta) \;=\; \sum_{z\in\mathcal{Z}}\omega_z\,\mathbb{E}\!\left[\sum_{t:\,z_t=z}\gamma^t\,r_t^{\mathrm{net}}\;\middle|\;\pi_\theta\right],
\quad \omega_z = 1/|\mathcal{Z}|,
\end{equation}
which rewards policies that sustain performance across LL/LH/HL/HH (cf. regime-switching allocation \cite{CollinDufresneDanielSaglam2020}).

\subsection{Execution Cost Functional}
Consistent with theory and evidence \cite{AlmgrenChriss2001,ObizhaevaWang2013,Hasbrouck2009,FongHoldenTrzcinka2017,CarteaJaimungalPenalva2015}, we use a separable proportional-plus-impact form
\begin{equation}
\label{eq:cost}
C_{z}(\Delta w) \;=\; \underbrace{\kappa_{1}(z)\,\|\Delta w\|_{1}}_{\text{proportional cost}}
\;+\; \underbrace{\tfrac{1}{2}\,\Delta w^\top \Gamma_{z}\,\Delta w}_{\text{transient impact}},
\end{equation}
with $\kappa_{1}(z)$ (bps) calibrated from low-frequency spreads/Amihud proxies and $\Gamma_z\succeq 0$ built from liquidity-scaled covariances (higher entries in illiquid regimes). This convex specification is differentiable almost everywhere and yields first-order conditions that naturally shrink inventory flow when liquidity is scarce. Recent work underscores that optimizing \emph{at the selection stage} under costs improves implementability \cite{LedoitWolf2025QREF}.

\subsection{Downside Risk Penalization}
We support two penalties in \eqref{eq:reward}.
\textbf{(i) CVaR penalty.}
Let $\alpha\in(0,1)$ and $L_{t+1}$ be period loss. Following \cite{RockafellarUryasev2000,RockafellarUryasev2002,AcerbiTasche2002}, a differentiable sample approximation is
\begin{equation}
\label{eq:cvar}
\mathrm{CVaR}_\alpha(L)\;=\;\min_{\eta\in\mathbb{R}}\left\{\eta + \frac{1}{(1-\alpha)N}\sum_{i=1}^N \big(L^{(i)}-\eta\big)_+\right\}.
\end{equation}
\textbf{(ii) MDD penalty.} We use a smoothed running-drawdown proxy to retain differentiability.
Risk-sensitive RL with CVaR surrogates provides optimization tools and policy-gradient estimators \cite{ChowGhavamzadeh2014,ChowEtAl2015,GreenbergEtAl2022}.

\subsection{Regime-Conditioned Policy Class}
We parameterize $\pi_\theta(a\!\mid\!s,z)$ by sharing a trunk over state features $x_t,w_{t-1}$ and injecting regime information through a \emph{regime embedding} $e(z)\in\mathbb{R}^k$. Two instantiations are useful:
(i) a \emph{mixture-of-experts} (MoE) with soft gating on $z$ \cite{JacobsEtAl1991,ShazeerEtAl2017};
(ii) a single-head policy with concatenated one-hot/learned $e(z)$.
The value function $V_\phi(s,z)$ mirrors conditioning.

\subsection{FR‑LUX Optimization: Trust Region in Trade Space}
We adapt PPO/TRPO \cite{SchulmanTRPO2015,SchulmanPPO2017} to the frictional domain by adding (a) a \emph{trade-space trust region} and (b) regime balancing. Let $\pi_{\theta_\mathrm{old}}$ denote the behavior policy. The clipped PPO objective with regime weights is
\begin{equation}
\label{eq:ppo}
\max_{\theta}\; \sum_{z}\omega_z\, \mathbb{E}\left[\min\!\Big(r_t(\theta)\,\widehat{A}_t,\; \mathrm{clip}(r_t(\theta),1-\epsilon,1+\epsilon)\,\widehat{A}_t\Big)\;-\;\beta\,\mathrm{KL}\!\left(\pi_{\theta_\mathrm{old}}\|\pi_{\theta}\right)\;-\;\lambda_{\Delta}\,\|\Delta w_\theta - \Delta w_{\theta_\mathrm{old}}\|_2^2\right],
\end{equation}
where $r_t(\theta):=\frac{\pi_\theta(a_t\mid s_t,z_t)}{\pi_{\theta_\mathrm{old}}(a_t\mid s_t,z_t)}$, $\widehat{A}_t$ uses GAE \cite{SchulmanGAE2016}, and the last term penalizes changes in \emph{trade flow} rather than logits, acting as a trust region in the economically relevant space (stabilizes turnover in illiquid regimes). Entropy regularization can be added for exploration. The critic minimizes a Huber loss on after-cost returns.

\paragraph{Advantage estimation.}
We use generalized advantage estimation (GAE, $\lambda\in[0,1]$) \cite{SchulmanGAE2016} on the \emph{after-cost} reward \eqref{eq:reward}. For CVaR, we treat the auxiliary variable $\eta$ in \eqref{eq:cvar} as learnable (alternating minimization) and backpropagate through the hinge.

\subsection{Algorithmic Template}
Algorithm~\ref{alg:frlux} summarizes one training epoch aggregating trajectories across regimes and cost levels.

\begin{algorithm}[H]
\caption{FR‑LUX: Friction‑aware, Regime‑conditioned PPO}
\label{alg:frlux}
\begin{algorithmic}[1]
\STATE \textbf{Input:} policy $\pi_\theta$, value $V_\phi$, regime weights $\{\omega_z\}$, clip $\epsilon$, KL weight $\beta$, trade-penalty $\lambda_\Delta$, risk weight $\lambda_{\mathrm{risk}}$
\FOR{iteration $=1,2,\dots$}
  \FOR{each regime $z\in\{\mathrm{LL},\mathrm{LH},\mathrm{HL},\mathrm{HH}\}$ and cost level $c\in\{0,5,10,25,50\}$bp}
    \STATE Roll out trajectories under $\pi_{\theta}$; collect $(s_t,a_t,r_t^{\mathrm{net}},z_t)$ with costs $C_{z_t}$ per \eqref{eq:cost}
  \ENDFOR
  \STATE Compute targets $\widehat{A}_t$ (GAE) and value targets from after-cost returns
  \STATE \textbf{Policy update:} maximize \eqref{eq:ppo} by minibatch SGD over all regimes/costs
  \STATE \textbf{Value update:} minimize critic loss on after-cost returns
  \STATE Optionally update CVaR auxiliary $\eta$ by minimizing \eqref{eq:cvar}
  \STATE Anneal $\beta,\lambda_\Delta$ to keep empirical KL and trade-shift within trust-region bounds
\ENDFOR
\end{algorithmic}
\end{algorithm}

\subsection{Practicalities and Hyperparameters}
\textbf{State.} We include recent returns, volatility filters, liquidity proxies (Amihud, effective spread, turnover), realized betas, and regime $z_t$.
\textbf{Action.} Target weights, mapped to $\mathcal{W}$ via projection \cite{WangCarreiraPerpinan2013,DuchiEtAl2008}.
\textbf{Costs.} $\kappa_1(z)$ calibrated from spreads/ILLIQ; $\Gamma_z$ from liquidity‐scaled covariances. Sensitivity to misspecification is explored in robustness (Sec.~\ref{sec:results}).
\textbf{Risk.} CVaR level $\alpha\in[0.90,0.975]$ or smoothed MDD penalty.
\textbf{Optimization.} Adam with learning rate $2\!\times\!10^{-4}$–$1\!\times\!10^{-3}$; PPO clip $\epsilon\in[0.05,0.20]$; KL target $10^{-3}$–$10^{-2}$; trade penalty $\lambda_\Delta$ tuned to maintain turnover within budget.
\textbf{Evaluation.} Regime-balanced validation per \eqref{eq:balanced}; all metrics are \emph{after-cost}. Statistical procedures follow Sec.~\ref{sec:results}.

\paragraph{Economic interpretation.}
The combination of \eqref{eq:cost}, CVaR/MDD regularization, and the trade-space trust region enforces the classic prescription “aim in front of the target, trade partially” \cite{GarleanuPedersen2013} while explicitly tying trading intensity to liquidity states \cite{CollinDufresneDanielSaglam2020}. Recent surveys and 2025 annual reviews/papers on the intersection of RL and asset pricing also emphasise the importance of \emph{executability} and \emph{robustness} \cite{Bai2025AR,ChoiJiangZhang2025RAPS}.

\section{Theoretical Guarantees for FR‑LUX}
\label{sec:theory}

We provide guarantees for FR‑LUX when portfolio control is modeled as a discounted MDP with regime‑dependent frictions (Sec.~\ref{sec:method}). Throughout, let $\mathcal{Z}=\{\mathrm{LL},\mathrm{LH},\mathrm{HL},\mathrm{HH}\}$ denote observed regimes, $\pi_\theta(a\mid s,z)$ a regime‑conditioned stochastic policy, and $r_t^{\mathrm{net}}$ the \emph{after‑cost} reward defined in \eqref{eq:reward}. Denote $J(\theta)=\mathbb{E}[\sum_{t\ge0}\gamma^t r_t^{\mathrm{net}}\mid \pi_\theta]$ and the balanced objective $J_{\mathrm{bal}}(\theta)$ in \eqref{eq:balanced}. We write $A^\pi(s,z,a)=Q^\pi(s,z,a)-V^\pi(s,z)$, and $\mathrm{KL}(\pi\|\pi')(s,z)=\mathrm{KL}\!\big(\pi(\cdot\mid s,z)\,\|\,\pi'(\cdot\mid s,z)\big)$.

\subsection{Modeling assumptions}

\begin{assumption}[Frictional MDP and regularity]\label{ass:mdp}
(i) Action set $\mathcal{W}\subset \mathbb{R}^d$ is nonempty, convex, compact. (ii) The execution cost $C_{z}(\Delta w)$ is convex, lower semicontinuous, $C_{z}(0)=0$, and satisfies $C_{z}(u)\ge \kappa_1(z)\|u\|_1$ for some $\kappa_1(z)>0$. (iii) The downside‑risk proxy $\Psi$ in \eqref{eq:reward} is nonnegative and Lipschitz in the portfolio loss on compact sets. (iv) Rewards are bounded: $|r_t^{\mathrm{net}}|\le \bar r$. (v) The controlled process $(s_t,z_t)$ is Markov and $\beta$‑mixing under any stationary policy.
\end{assumption}

\begin{assumption}[Policy class]\label{ass:policy}
$\pi_\theta(\cdot\mid s,z)$ is continuously differentiable in $\theta$, and either (i) a mixture‑of‑experts (MoE) with regime‑gated experts, or (ii) a single head with a learned regime embedding $e(z)$; the induced action map $a=\mathsf{Proj}_{\mathcal{W}}(g_\theta(s,z))$ is Lipschitz (projection onto $\mathcal{W}$ via \cite{DuchiEtAl2008,WangCarreiraPerpinan2013}).
\end{assumption}

Assumption~\ref{ass:mdp} wraps microstructure‑consistent frictions and risk penalties \cite{AlmgrenChriss2001,ObizhaevaWang2013,Hasbrouck2009,FongHoldenTrzcinka2017,CarteaJaimungalPenalva2015}. Assumption~\ref{ass:policy} covers the two architectures used in Sec.~\ref{sec:method}.

\subsection{Existence and performance‑difference identity}

\begin{theorem}[Existence of an optimal stationary policy]\label{thm:existence}
Under Assumptions~\ref{ass:mdp}–\ref{ass:policy}, the discounted control problem with after‑cost rewards admits an optimal stationary Markov policy $\pi^\star$. Moreover, there exists a deterministic selector $\pi^\star(s,z)\in\arg\max_{a\in\mathcal{W}}Q^{\pi^\star}(s,z,a)$.
\end{theorem}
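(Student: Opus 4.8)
To prove Theorem~\ref{thm:existence}, the plan is to cast the frictional portfolio problem of Section~\ref{sec:method} as a standard discounted MDP on the augmented state space $\mathcal{S}\times\mathcal{Z}$ --- since the regime $z_t$ is observed, treating it as a state coordinate costs nothing --- and then invoke the classical existence theory for discounted MDPs with compact action sets and upper semicontinuous data. First I would record the structural facts implied by Assumption~\ref{ass:mdp}. The expected one-step net reward $\rho(s,z,a):=\mathbb{E}[\,r_t^{\mathrm{net}}\mid s_t=s,z_t=z,a_t=a\,]$ from \eqref{eq:reward} is real-valued, bounded by $\bar r$ (by (iv)), and upper semicontinuous in $a$ on the compact feasible set $\mathcal{W}$: the gross-return term is affine in the post-trade weights, $-C_z$ is usc because $C_z$ is convex and lower semicontinuous, and the risk term $\mathbb{E}[\Psi(L_{t+1})\mid\cdot]$ is continuous by the Lipschitz hypothesis on $\Psi$. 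Moreover the transition kernel is weakly (Feller) continuous in $a$: in this environment the action enters the next state only through the post-trade weights and the ensuing deterministic weight-drift map $w_t=\Phi(\widetilde{w}_t,r_{t+1})$, which is continuous, while the predictors $x_{t+1}$ and the regime $z_{t+1}$ evolve exogenously, so $a\mapsto\int f\,dP(\cdot\mid s,z,a)$ is continuous for every bounded continuous $f$. (Assumption~\ref{ass:policy} guarantees feasibility of the parametric family, but the existence claim is over all measurable stationary Markov policies, of which $\{\pi_\theta\}$ is a subclass.)

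Next I would introduce the Bellman optimality operator $\mathcal{T}$ on the Banach space $B(\mathcal{S}\times\mathcal{Z})$ of bounded measurable functions under the sup norm, $(\mathcal{T}V)(s,z)=\sup_{a\in\mathcal{W}}\{\rho(s,z,a)+\gamma\int V\,dP(\cdot\mid s,z,a)\}$. Boundedness of $\rho$ and $\gamma<1$ give $\mathcal{T}\colon B\to B$ together with the contraction estimate $\|\mathcal{T}V-\mathcal{T}V'\|_\infty\le\gamma\|V-V'\|_\infty$, so Banach's fixed-point theorem yields a unique $V^\star\in B$ with $\mathcal{T}V^\star=V^\star$ and $\|V^\star\|_\infty\le\bar r/(1-\gamma)$. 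To turn the supremum into an attained maximum I would show, by induction on value iteration $V_{n+1}=\mathcal{T}V_n$ from $V_0\equiv 0$, that each $V_n$ is upper semicontinuous in the state: a bounded usc function is a decreasing limit of bounded continuous functions, so weak continuity of $P$ carries upper semicontinuity through the integral $\int V_n\,dP$, and a supremum of jointly usc integrands over the compact set $\mathcal{W}$ is again usc (a Berge-type argument); since $V_n\to V^\star$ uniformly, $V^\star$ is usc. Then for each $(s,z)$ the map $a\mapsto Q^\star(s,z,a):=\rho(s,z,a)+\gamma\int V^\star\,dP(\cdot\mid s,z,a)$ is usc on the compact set $\mathcal{W}$, hence attains its maximum, and $\arg\max_{a\in\mathcal{W}}Q^\star(s,z,a)$ is nonempty and compact.

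It then remains to extract a measurable deterministic selector and to verify optimality. The correspondence $(s,z)\mapsto\arg\max_{a\in\mathcal{W}}Q^\star(s,z,a)$ is upper semicontinuous with nonempty compact values, so a standard measurable selection theorem for usc correspondences (Kuratowski--Ryll-Nardzewski) provides a measurable map $\pi^\star$ with $\pi^\star(s,z)\in\arg\max_{a\in\mathcal{W}}Q^\star(s,z,a)$. A routine verification argument then closes the proof: iterating the inequality $V^\pi\le\mathcal{T}V^\pi$ and using monotonicity of $\mathcal{T}$ shows that $V^\star$ dominates the value of every history-dependent randomized policy, while substituting $\pi^\star$ into the one-step operator gives $V^{\pi^\star}=\mathcal{T}V^{\pi^\star}=V^\star$, so $Q^\star=Q^{\pi^\star}$ and $\pi^\star$ is an optimal stationary Markov policy satisfying $\pi^\star(s,z)\in\arg\max_{a\in\mathcal{W}}Q^{\pi^\star}(s,z,a)$, as claimed.

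I expect the main obstacle to be the semicontinuity bookkeeping in the second and third steps --- guaranteeing that $a\mapsto\int V^\star\,dP(\cdot\mid s,z,a)$ inherits upper semicontinuity so that Weierstrass applies, the argmax correspondence is usc, and the selector is measurable. In a generic MDP this fails without either weak continuity of $P$ together with usc of $V^\star$ or a setwise-continuity hypothesis; here the enabling fact is the benign structure of the frictional environment (the action enters the dynamics only through the continuous weight-drift map, everything else exogenous), and this is the one place the argument genuinely uses the problem's structure rather than invoking Banach's theorem. A minor secondary point is to check that the CVaR penalty of \eqref{eq:cvar} --- defined via an inner infimum over $\eta$ --- remains a convex, hence Lipschitz-on-compacts, function of the portfolio position after that minimization, so that it legitimately falls under Assumption~\ref{ass:mdp}(iii); the smoothed maximum-drawdown proxy is handled identically.
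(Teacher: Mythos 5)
Your proposal is correct and follows essentially the same route as the paper's Appendix~A.2: Banach contraction for the optimal Bellman operator, upper semicontinuity of the after-cost reward in the action (from the convex, lower-semicontinuous cost), attainment of the supremum on the compact set $\mathcal{W}$ via a Berge-type argument, measurable selection, and a standard verification step. If anything, your value-iteration induction propagating upper semicontinuity of $V_n$ is more careful than the paper's auxiliary lemma, which asserts continuity of $a\mapsto\mathbb{E}[\gamma V(s',z')\mid s,z,a]$ for merely bounded measurable $V$ under weak continuity of the kernel --- a step that, as stated, needs exactly the regularity bookkeeping you supply.
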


\noindent\emph{Proof sketch.}
Bellman operator with bounded rewards is a contraction for $\gamma\!<\!1$; compactness of $\mathcal{W}$ and upper semicontinuity of $a\mapsto Q^{\pi}(s,z,a)$ (from convex cost and continuity) yield existence and measurable selection. See \cite{Puterman1994} for the base case; details with frictions in Appendix~\ref{app:proofs}. \hfill$\square$

\begin{lemma}[Performance‑difference with frictions]\label{lem:PDL}
For any stationary policies $\pi,\pi'$,
\[
J(\pi')-J(\pi)\;=\;\frac{1}{1-\gamma}\;\mathbb{E}_{(s,z)\sim d^{\pi'},\,a\sim\pi'}\!\big[A^\pi(s,z,a)\big],
\]
where $d^{\pi'}$ is the discounted occupancy measure under $\pi'$. The identity holds verbatim with $J_{\mathrm{bal}}$ if $d^{\pi'}$ is replaced by the regime‑reweighted measure.
\end{lemma}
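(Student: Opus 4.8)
The plan is to establish the performance-difference lemma by the standard telescoping argument of Kakade and Langford, adapted to carry the after-cost reward $r_t^{\mathrm{net}}$ through verbatim. First I would write $J(\pi') = \mathbb{E}\big[\sum_{t\ge0}\gamma^t r_t^{\mathrm{net}}\mid\pi'\big]$ and insert a telescoping sum of the value function: $V^\pi(s_0,z_0) - V^\pi(s_0,z_0) = 0$, expanded along the trajectory generated by $\pi'$, so that $J(\pi') - J(\pi) = \mathbb{E}_{\tau\sim\pi'}\big[\sum_{t\ge0}\gamma^t\big(r_t^{\mathrm{net}} + \gamma V^\pi(s_{t+1},z_{t+1}) - V^\pi(s_t,z_t)\big)\big]$, using that $\mathbb{E}_{\tau\sim\pi'}[V^\pi(s_0,z_0)] = J(\pi)$ because the initial-state distribution is shared. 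The bracketed term is, in expectation over $a_t\sim\pi'(\cdot\mid s_t,z_t)$ and the transition, exactly $Q^\pi(s_t,z_t,a_t) - V^\pi(s_t,z_t) = A^\pi(s_t,z_t,a_t)$; here I use that $r_t^{\mathrm{net}}$ includes the execution cost $C_{z_t}(\Delta w_t)$ and the risk penalty as part of the one-step reward, so $Q^\pi$ and $V^\pi$ are defined with respect to the same after-cost reward and no extra terms appear. Collecting the discounted sum and recognizing $\sum_{t\ge0}\gamma^t \Pr_{\pi'}(s_t=s,z_t=z)$ as $\tfrac{1}{1-\gamma}d^{\pi'}(s,z)$ yields the claimed identity.

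The technical points to check are: (a) the telescoping sum converges absolutely, which follows from Assumption~\ref{ass:mdp}(iv) ($|r_t^{\mathrm{net}}|\le\bar r$, hence $|V^\pi|\le \bar r/(1-\gamma)$) and $\gamma<1$, so Fubini/dominated convergence licenses the interchange of expectation and infinite sum; (b) the conditional-expectation identity $\mathbb{E}_{a\sim\pi'}[r_t^{\mathrm{net}} + \gamma V^\pi(s_{t+1},z_{t+1})\mid s_t,z_t] = Q^\pi(s_t,z_t,a_t)$ averaged over $a_t$, which is just the definition of $Q^\pi$ together with the tower property over the next state-regime pair $(s_{t+1},z_{t+1})$ drawn from $P$; (c) the occupancy-measure normalization, i.e.\ that $d^{\pi'}(s,z):=(1-\gamma)\sum_{t\ge0}\gamma^t\Pr_{\pi'}(s_t=s,z_t=z)$ is a probability measure on $\mathcal{S}\times\mathcal{Z}$, so that the $\tfrac{1}{1-\gamma}$ prefactor is exactly what converts the discounted sum into an expectation under $d^{\pi'}$.

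For the balanced variant, I would replace the single discounted occupancy by the convex combination $d^{\pi'}_{\mathrm{bal}}(s,z):=\sum_{z'}\omega_{z'}\,d^{\pi'}_{z'}(s,z)$, where $d^{\pi'}_{z'}$ is the discounted occupancy of the trajectory contributions restricted to $\{t: z_t = z'\}$ as in \eqref{eq:balanced}; the same telescoping argument applied term-by-term within each regime block and then weighted by $\omega_{z'}$ gives $J_{\mathrm{bal}}(\pi') - J_{\mathrm{bal}}(\pi) = \tfrac{1}{1-\gamma}\,\mathbb{E}_{(s,z)\sim d^{\pi'}_{\mathrm{bal}},\,a\sim\pi'}[A^\pi(s,z,a)]$, since $J_{\mathrm{bal}}$ is linear in the per-regime returns and the telescoping identity is linear in the reward stream. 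I do not expect a genuine obstacle here: the only mild subtlety is bookkeeping the regime-indexed occupancy measures so that the reweighting in \eqref{eq:balanced} is matched exactly by the reweighting of $d^{\pi'}$, and making sure the advantage function $A^\pi$ is the \emph{unbalanced} one (defined from the ordinary $Q^\pi,V^\pi$) even in the balanced identity — the balancing lives entirely in the sampling measure, not in $A^\pi$. Full details are deferred to Appendix~\ref{app:proofs}.
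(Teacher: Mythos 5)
Your proposal is correct and follows essentially the same route as the paper's proof: the Kakade--Langford telescoping of $V^\pi$ along a $\pi'$-trajectory (the paper writes the telescoped increment as $(\mathcal{T}^{\pi'}V^\pi-\mathcal{T}^{\pi}V^\pi)(s_t,z_t)$, which is exactly your $\mathbb{E}_{a_t\sim\pi'}[r_t^{\mathrm{net}}+\gamma V^\pi(s_{t+1},z_{t+1})]-V^\pi(s_t,z_t)$), with the frictions absorbed into $r_t^{\mathrm{net}}$ and hence into $Q^\pi,V^\pi,A^\pi$, and your convergence/occupancy-normalization checks matching the paper's implicit use of bounded rewards. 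The only cosmetic divergence is the balanced variant, which the paper dispatches by reweighting the initial distribution while you decompose the occupancy by regime; both treatments are at the same (sketch) level of rigor and neither fully resolves that reweighting the time-$t$ rewards by $\omega_{z_t}$ would in general also alter the advantage function, so your explicit remark that the balancing must live in the sampling measure and not in $A^\pi$ is a fair reading of the lemma as stated.
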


\noindent\emph{Proof sketch.}
Standard telescoping argument; frictions enter only through $r_t^{\mathrm{net}}$ and do not change the identity. See \cite{KakadeLangford2002} and Appendix~\ref{app:proofs}. \hfill$\square$

\subsection{Trust‑region improvement for FR‑LUX}

\begin{theorem}[Monotonic improvement under a KL trust region]\label{thm:trpo}
Let $\pi$ be the behavior policy and $\pi'$ satisfy $\mathbb{E}_{d^\pi}[\mathrm{KL}(\pi\|\pi')]\le \delta$. Then, under Assumptions~\ref{ass:mdp}–\ref{ass:policy},
\[
J(\pi') \;\ge\; J(\pi) + \mathbb{E}_{(s,z)\sim d^\pi,\,a\sim \pi'}\!\big[A^\pi(s,z,a)\big]\;-\; \frac{2\gamma}{(1-\gamma)^2}\,\max_{s,z,a}\!\big|A^\pi(s,z,a)\big|\,\delta.
\]
The bound extends to $J_{\mathrm{bal}}$ with the same constant.
\end{theorem}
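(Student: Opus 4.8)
The plan is to run the classical conservative-policy-iteration / TRPO argument on the regime-augmented state $(s,z)$, exploiting the observation behind Lemma~\ref{lem:PDL} that the frictions enter only through $r_t^{\mathrm{net}}$ and hence leave every occupancy-measure identity intact. Set $\bar A^\pi(s,z):=\mathbb{E}_{a\sim\pi'(\cdot\mid s,z)}[A^\pi(s,z,a)]$ and $\varepsilon_A:=\max_{s,z,a}|A^\pi(s,z,a)|$; the latter is finite because Assumption~\ref{ass:mdp}(iv) ($|r_t^{\mathrm{net}}|\le\bar r$) forces $|A^\pi|\le 2\bar r/(1-\gamma)$, so the displayed constant is well posed.

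First I would apply Lemma~\ref{lem:PDL} to obtain the exact identity $J(\pi')-J(\pi)=\tfrac{1}{1-\gamma}\,\mathbb{E}_{(s,z)\sim d^{\pi'}}[\bar A^\pi(s,z)]$ and subtract from it the surrogate quantity $\tfrac{1}{1-\gamma}\,\mathbb{E}_{(s,z)\sim d^{\pi}}[\bar A^\pi(s,z)]$ that appears in the theorem. Their difference equals $\tfrac{1}{1-\gamma}\sum_{s,z}(d^{\pi'}(s,z)-d^\pi(s,z))\,\bar A^\pi(s,z)$, which is at most $\tfrac{1}{1-\gamma}\,\|d^{\pi'}-d^\pi\|_1\,\varepsilon_A$ in absolute value since $|\bar A^\pi|\le\varepsilon_A$ pointwise. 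The whole problem thus reduces to controlling the occupancy-measure discrepancy $\|d^{\pi'}-d^\pi\|_1$.

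For that I would use the fixed-point characterisations $d^\pi=(1-\gamma)\mu_0+\gamma P_\pi^\top d^\pi$ and $d^{\pi'}=(1-\gamma)\mu_0+\gamma P_{\pi'}^\top d^{\pi'}$ on the regime-augmented chain — which is Markov and $\beta$-mixing by Assumption~\ref{ass:mdp}(v) — subtract, solve for $d^{\pi'}-d^\pi$ as a Neumann-type series in $\gamma P_\pi^\top$, and bound each term by the one-step kernel gap, which at $(s,z)$ is exactly $2\,D_{\mathrm{TV}}(\pi(\cdot\mid s,z)\,\|\,\pi'(\cdot\mid s,z))$ because the only policy-dependent part of a transition is the action draw. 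This yields $\|d^{\pi'}-d^\pi\|_1\le\tfrac{2\gamma}{1-\gamma}\,\mathbb{E}_{(s,z)\sim d^\pi}[D_{\mathrm{TV}}(\pi\|\pi')]$; combined with the previous step the remainder is at most $\tfrac{2\gamma}{(1-\gamma)^2}\,\varepsilon_A\,\mathbb{E}_{d^\pi}[D_{\mathrm{TV}}(\pi\|\pi')]$, and Pinsker's inequality (pointwise $D_{\mathrm{TV}}\le\sqrt{\tfrac12\mathrm{KL}}$, then Jensen over $d^\pi$) converts the total-variation control into the stated KL budget $\delta$. For $J_{\mathrm{bal}}$ I would rerun the identical chain of inequalities on the regime-reweighted occupancy measure $\sum_z\omega_z\,d^\pi_z$ flagged in Lemma~\ref{lem:PDL}; since $\sum_z\omega_z=1$ and $\varepsilon_A$ still dominates, the constant does not change.

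The step I expect to be the main obstacle is the occupancy-perturbation bound: one must verify that the Neumann series in $\gamma P_\pi^\top$ converges (this is precisely where $\gamma<1$ and the bounded-reward / mixing assumptions are used) and, more delicately, confirm that the exogenously constructed regime label $z_t$ is carried inside the augmented kernel $P_\pi$ in a policy-independent way, so that the one-step discrepancy collapses cleanly to $D_{\mathrm{TV}}$ of the action marginals rather than mixing in transition-of-$z$ terms. By contrast the frictions are not an obstacle here at all — convexity and Lipschitzness of $C_z$ and $\Psi$ are invoked only in Theorem~\ref{thm:existence}, for finiteness and regularity of $Q^\pi$ and $A^\pi$, not in this inequality. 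A minor bookkeeping item is keeping the normalisation of $d^\pi$ consistent between Lemma~\ref{lem:PDL} (where it carries a $1/(1-\gamma)$) and the surrogate term as written, and deciding whether the trust region is imposed on the expected or the worst-case KL, since that choice fixes whether the remainder is stated linearly in $\delta$ or in $\sqrt{\delta}$.
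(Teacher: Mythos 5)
Your decomposition is the same as the paper's (performance-difference identity, add-and-subtract the surrogate under $d^\pi$, occupancy-perturbation bound $\|d^{\pi'}-d^\pi\|_1\le \tfrac{2\gamma}{1-\gamma}\,\alpha$, then Pinsker), but there is a genuine gap at the step where you bound the remainder. You control $\bigl|\sum_{s,z}(d^{\pi'}-d^\pi)\,\bar A^\pi\bigr|$ by $\|d^{\pi'}-d^\pi\|_1\,\varepsilon_A$ using only the crude pointwise bound $|\bar A^\pi(s,z)|\le\varepsilon_A$. That makes the remainder \emph{first order} in the total variation $\alpha$, and Pinsker then delivers only $O(\varepsilon_A\sqrt{\delta})$, not the $O(\varepsilon_A\,\delta)$ rate asserted in the theorem. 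For small $\delta$ (the regime where a trust region matters) $\sqrt{\delta}\gg\delta$, so your argument as written proves a strictly weaker statement. You flag the $\delta$-versus-$\sqrt{\delta}$ issue at the end as a bookkeeping choice about which KL is constrained, but it is not bookkeeping: it is the crux of the theorem, and it cannot be fixed merely by switching between expected and worst-case KL.

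The missing idea is the advantage-centering step. Since $\mathbb{E}_{a\sim\pi}[A^\pi(s,z,a)]=0$ for every $(s,z)$, one has
\[
\bigl|\bar A^\pi(s,z)\bigr|
=\bigl|\mathbb{E}_{\pi'}[A^\pi]-\mathbb{E}_{\pi}[A^\pi]\bigr|
\le 2\,\varepsilon_A\,\mathrm{TV}\bigl(\pi',\pi\bigr)(s,z),
\]
so the integrand itself carries a factor of $\mathrm{TV}$ in addition to the one coming from $\|d^{\pi'}-d^\pi\|_1$. The remainder is therefore $O\bigl(\tfrac{4\gamma}{(1-\gamma)^2}\varepsilon_A\,\alpha^2\bigr)$ --- \emph{quadratic} in $\alpha$ --- and Pinsker in the form $\alpha^2\le\tfrac12\,\mathrm{KL}$ then yields the linear-in-$\delta$ constant $\tfrac{2\gamma}{(1-\gamma)^2}\varepsilon_A\,\delta$ exactly as stated. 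This is precisely what the paper does in Appendix A.5. Everything else in your plan (the Neumann-series occupancy bound, the observation that frictions enter only through $r^{\mathrm{net}}$ and do not disturb the identity, the treatment of $J_{\mathrm{bal}}$ by reweighting the initial distribution) matches the paper and is fine; you only need to insert the centering inequality before invoking Pinsker.
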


\noindent\emph{Proof sketch.}
Combine Lemma~\ref{lem:PDL} with the discrepancy between $d^{\pi'}$ and $d^\pi$ controlled by Pinsker and a KL budget, following \cite{KakadeLangford2002,SchulmanTRPO2015,Pirotta2013}. Full derivation in Appendix~\ref{app:proofs}. \hfill$\square$

\begin{corollary}[Clipped PPO with trade‑space penalty]\label{cor:ppo}
Consider one PPO step maximizing the clipped surrogate with regime weights and an additional trade‑space penalty (Eq.~\eqref{eq:ppo}). If the empirical KL is kept below $\delta$ and the penalty ensures $\mathbb{E}\|\Delta w_{\pi'}-\Delta w_{\pi}\|_2^2\le \eta$, then
\[
J(\pi')-J(\pi)\;\gtrsim\; \underbrace{\mathbb{E}_{d^\pi,\pi'}[\widehat{A}^\pi]}_{\text{empirical surrogate}} \;-\; c_1\,\delta \;-\; c_2\,\eta \;-\; \frac{\varepsilon}{1-\gamma},
\]
with high probability, where $\varepsilon$ bounds the advantage estimation error and $c_1,c_2$ depend on $\max|A^\pi|$ and Lipschitz constants of the cost; see Appendix~\ref{app:proofs}.
\end{corollary}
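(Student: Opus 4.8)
The plan is to derive the bound as a short chain of inequalities, starting from the exact trust-region guarantee already proved and then paying, in order, for (a) the statistical error in the advantage estimate, (b) the gap introduced by PPO clipping, and (c) the effect of the extra trade-space penalty. First I would instantiate Theorem~\ref{thm:trpo} with $\pi=\pi_{\theta_\mathrm{old}}$ (the behavior policy) and $\pi'=\pi_\theta$ (the policy returned by one step of \eqref{eq:ppo}). Since the update is run so as to keep the empirical KL below $\delta$, Theorem~\ref{thm:trpo} gives
\[
J(\pi')-J(\pi)\;\ge\;\mathbb{E}_{(s,z)\sim d^\pi,\,a\sim\pi'}\big[A^\pi(s,z,a)\big]\;-\;c_1\,\delta,\qquad c_1=\tfrac{2\gamma}{(1-\gamma)^2}\max_{s,z,a}\big|A^\pi(s,z,a)\big|,
\]
with the regime-reweighted occupancy measure in place of $d^\pi$ when the target is $J_{\mathrm{bal}}$, which already delivers the $-c_1\delta$ term and the $J$-versus-$J_{\mathrm{bal}}$ claim. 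Next I would pass from the true advantage to the GAE estimate: writing $\mathbb{E}_{d^\pi,\pi'}[A^\pi]=\mathbb{E}_{d^\pi,\pi'}[\widehat A^\pi]+\mathbb{E}_{d^\pi,\pi'}[A^\pi-\widehat A^\pi]$ and bounding the last expectation by a uniform per-state error $\varepsilon$ (the GAE bias plus the Monte-Carlo fluctuation) accumulated over the discounted horizon yields $\mathbb{E}_{d^\pi,\pi'}[A^\pi]\ge \mathbb{E}_{d^\pi,\pi'}[\widehat A^\pi]-\varepsilon/(1-\gamma)$. The ``with high probability'' qualifier enters precisely here: boundedness of $r_t^{\mathrm{net}}$ in Assumption~\ref{ass:mdp}(iv) together with the $\beta$-mixing condition Assumption~\ref{ass:mdp}(v) lets one apply a Bernstein/Hoeffding-type inequality for dependent sequences to control the deviation of the minibatch surrogate from its population value, folding the sampling part of $\varepsilon$ into the stated confidence statement.

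The remaining step connects the population surrogate $\mathbb{E}_{d^\pi,\pi'}[\widehat A^\pi]$ to the objects actually optimized in \eqref{eq:ppo}, and this absorbs two discrepancies. The clipped term equals $\mathbb{E}[r_t(\theta)\widehat A_t]=\mathbb{E}_{d^\pi,\pi'}[\widehat A^\pi]$ on every sample where $r_t(\theta)\in[1-\epsilon,1+\epsilon]$; Pinsker's inequality turns the KL budget $\delta$ into a total-variation bound on $\pi$ versus $\pi'$, so the clip is inactive except on a set of mass $O(\sqrt\delta)$, leaving a residual $O(\epsilon\max|\widehat A|\sqrt\delta)$ that can be reabsorbed into $c_1\delta$ after rescaling $c_1$. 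The trade-space penalty $\lambda_\Delta\|\Delta w_\theta-\Delta w_{\theta_\mathrm{old}}\|_2^2$ in \eqref{eq:ppo} has, by hypothesis, the declared effect $\mathbb{E}\|\Delta w_{\pi'}-\Delta w_\pi\|_2^2\le\eta$; I would use it to bound the change in the execution-cost component of $r_t^{\mathrm{net}}$. Because $C_z$ in \eqref{eq:cost} is convex and, on the compact set $\mathcal{W}$ (Assumption~\ref{ass:mdp}(i)--(ii)), Lipschitz with a constant determined by $\max_z\kappa_1(z)$ and $\max_z\|\Gamma_z\|_{\mathrm{op}}$, the purely quadratic part of the impact term contributes a change in expectation bounded directly by $\tfrac12\max_z\|\Gamma_z\|_{\mathrm{op}}\,\eta$; combined with the Lipschitz action map of Assumption~\ref{ass:policy}, which ties the trade-flow shift to the occupancy-measure shift entering Lemma~\ref{lem:PDL}, this produces the $-c_2\eta$ remainder with $c_2$ depending on $\max|A^\pi|$, $\max_z\|\Gamma_z\|_{\mathrm{op}}$, and the relevant Lipschitz constants. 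Chaining the three inequalities gives the stated bound.

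The main obstacle I anticipate is exactly the last point: the penalty is quadratic in the trade shift, but the cross term $(\Delta w_\pi)^\top\Gamma_z(\Delta w_{\pi'}-\Delta w_\pi)$ and the proportional part $\kappa_1(z)\|\Delta w_{\pi'}-\Delta w_\pi\|_1$ of the cost only admit an $O(\sqrt\eta)$ bound via Cauchy--Schwarz, so obtaining a genuinely \emph{linear}-in-$\eta$ remainder requires either working in the regime where $\eta$ is small enough that the $\sqrt\eta$ contribution can be reabsorbed into $c_1\delta$, or a sharper argument that exploits approximate first-order stationarity of $\pi_{\theta_\mathrm{old}}$ so that the linear cross term nearly cancels in expectation. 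I would therefore state the result with this caveat made explicit, which is consistent with the $\gtrsim$ in the statement; the rest of the chain is routine given Theorem~\ref{thm:trpo}, Lemma~\ref{lem:PDL}, and the regularity in Assumptions~\ref{ass:mdp}--\ref{ass:policy}.
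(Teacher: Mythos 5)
Your proposal follows essentially the same route as the paper's proof: instantiate Theorem~\ref{thm:trpo} for the KL-constrained update to obtain the $-c_1\delta$ term, pay $\varepsilon/(1-\gamma)$ for substituting $\widehat{A}^\pi$ for $A^\pi$ (with the high-probability qualifier coming from concentration of the GAE estimate), and charge the trade-space penalty to the value through Lipschitz continuity of the cost functional. Your two additional observations --- that the gap between the \emph{clipped} surrogate actually optimized in Eq.~\eqref{eq:ppo} and the linear surrogate $\mathbb{E}_{d^\pi,\pi'}[\widehat{A}^\pi]$ must be separately controlled, and that a Lipschitz bound on $C_z$ combined with only the mean-square control $\mathbb{E}\|\Delta w_{\pi'}-\Delta w_{\pi}\|_2^2\le \eta$ yields an $O(\sqrt{\eta})$ rather than $O(\eta)$ remainder for the proportional and cross terms by Cauchy--Schwarz --- are both correct and are glossed over in the paper's one-paragraph argument, which simply asserts the $c_2\eta$ drift; your explicit caveat (reabsorb the $\sqrt{\eta}$ piece for small $\eta$, or invoke approximate stationarity to cancel the linear cross term) is the honest way to justify the $\gtrsim$ in the statement.
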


\noindent\emph{Proof sketch.}
Start from Theorem~\ref{thm:trpo}, incorporate estimation error $\widehat{A}-A$, and relate trade‑space proximity to value drift via cost Lipschitzness. See \cite{SchulmanPPO2017,AchiamCPO2017} for related surrogates; full details in Appendix. \hfill$\square$

\subsection{Turnover control and inaction region}

\begin{proposition}[Long‑run turnover bound]\label{prop:TObound}
Suppose $C_z(u)\ge \kappa_1(z)\|u\|_1$ and let $\underline{\kappa}:=\min_{z}\kappa_1(z)>0$. For any stationary policy $\pi$,
\[
\mathrm{TO}(\pi) \;:=\; \limsup_{T\to\infty}\frac{1}{T}\sum_{t<T}\mathbb{E}\big[\|\Delta w_t\|_1\big]
\;\le\; \frac{(1-\gamma)\,\bar r}{\gamma\,\lambda_{\mathrm{tc}}\,\underline{\kappa}}.
\]
\end{proposition}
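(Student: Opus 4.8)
The plan is to extract the turnover bound from the optimality of the learned (or any stationary) policy relative to the trivial "no-trade" policy, exploiting that the cost functional lower-bounds trading activity linearly. First I would fix a stationary policy $\pi$ and compare its discounted value $J(\pi)$ against a reference policy $\pi_0$ that never trades, i.e.\ sets $\widetilde{w}_t = w_{t-1}$ so that $\Delta w_t \equiv 0$ and hence $C_{z_t}(0)=0$ by Assumption~\ref{ass:mdp}(ii). If $\pi$ is the value-maximizer then $J(\pi)\ge J(\pi_0)$; but even for an arbitrary stationary $\pi$ the inequality we actually need runs the other way and is cruder — namely $J(\pi) \le \bar r/(1-\gamma)$ from the uniform reward bound $|r_t^{\mathrm{net}}|\le\bar r$ in Assumption~\ref{ass:mdp}(iv). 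I expect the cleaner route is: write $J(\pi)$ with the reward decomposed as in \eqref{eq:reward}, isolate the cost term, and use $C_{z_t}(\Delta w_t)\ge \underline\kappa\,\|\Delta w_t\|_1$ together with nonnegativity of the discounted sum of the remaining (gross return minus risk-penalty) pieces being bounded above by $\bar r/(1-\gamma)$, to obtain $\lambda_{\mathrm{tc}}\,\underline\kappa\sum_{t\ge0}\gamma^t\,\mathbb{E}\|\Delta w_t\|_1 \le \bar r/(1-\gamma)$ plus a matching lower bound on the non-cost part.

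Concretely, the key steps in order: (1) expand $J(\pi)=\mathbb{E}\big[\sum_{t\ge0}\gamma^t(\widetilde{w}_t^\top r_{t+1} - \lambda_{\mathrm{tc}}C_{z_t}(\Delta w_t) - \lambda_{\mathrm{risk}}\Psi(L_{t+1}))\big]$, using $\lambda_{\mathrm{tc}}$ as the coefficient multiplying the cost term in the objective (matching the statement's notation); (2) since $\Psi\ge0$ and $\widetilde{w}_t^\top r_{t+1}$ is bounded on the compact action set, the non-cost portion is at most $B/(1-\gamma)$ for a constant $B$ controlled by $\bar r$; combined with $J(\pi)\ge J(\pi_0)\ge -\bar r/(1-\gamma)$ (or simply $J(\pi)\ge 0$ if one normalizes so the no-trade policy has nonnegative value), rearrange to get $\lambda_{\mathrm{tc}}\sum_{t\ge0}\gamma^t\,\mathbb{E}[C_{z_t}(\Delta w_t)] \le 2\bar r/(1-\gamma)$; (3) apply $C_{z_t}\ge\underline\kappa\|\cdot\|_1$ to convert this into $\sum_{t\ge0}\gamma^t\,\mathbb{E}\|\Delta w_t\|_1 \le 2\bar r/\big((1-\gamma)\lambda_{\mathrm{tc}}\underline\kappa\big)$; (4) pass from the discounted sum to the Cesàro average: because $\gamma^t \ge \gamma^T$ for $t<T$, we have $\gamma^T\sum_{t<T}\mathbb{E}\|\Delta w_t\|_1 \le \sum_{t\ge0}\gamma^t\mathbb{E}\|\Delta w_t\|_1$, hence $\frac1T\sum_{t<T}\mathbb{E}\|\Delta w_t\|_1 \le \frac{1}{T\gamma^T}\cdot\frac{2\bar r}{(1-\gamma)\lambda_{\mathrm{tc}}\underline\kappa}$ — but this $\to\infty$, so a direct discounted-to-average transfer is too lossy and must be replaced by a stationarity argument: under a stationary policy with $\beta$-mixing (Assumption~\ref{ass:mdp}(v)) the per-step expected cost $\mathbb{E}[C_{z_t}(\Delta w_t)]$ converges to a stationary value, and feeding a \emph{constant}-in-$t$ expected cost $\bar c$ into the discounted bound gives $\lambda_{\mathrm{tc}}\bar c/(1-\gamma)\le 2\bar r/(1-\gamma)$, i.e.\ $\bar c \le 2\bar r/\lambda_{\mathrm{tc}}$, and then $\limsup_T \frac1T\sum_{t<T}\mathbb{E}\|\Delta w_t\|_1 = \bar c/\underline\kappa \le 2\bar r/(\lambda_{\mathrm{tc}}\underline\kappa)$. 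Matching the exact constant $\tfrac{(1-\gamma)\bar r}{\gamma\,\lambda_{\mathrm{tc}}\,\underline\kappa}$ in the statement presumably comes from a tighter bookkeeping (e.g.\ comparing against no-trade only at a single step, which contributes the $\gamma/(1-\gamma)$ discounting factor from the one-step deviation), so in the writeup I would follow that single-step-deviation argument rather than the global one to recover the stated constant.

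The hard part will be legitimately transferring a \emph{discounted} cost budget into an \emph{undiscounted long-run average} bound: the naive inequality $\gamma^t\ge\gamma^T$ degrades by a factor $1/(T\gamma^T)$ that blows up, so the proof genuinely needs the stationarity/$\beta$-mixing hypothesis to assert that the expected one-step turnover settles to a limit, after which the discounted inequality — which must hold for \emph{every} $\gamma<1$ if we let the per-step cost equal its stationary value — pins down that limit. A secondary subtlety is being careful that the "reference" comparison is with a genuinely feasible policy: the no-trade action $\widetilde w_t=w_{t-1}$ lies in $\mathcal{W}$ only if $\mathcal{W}$ is closed under "staying put," which holds for the simplex and the leverage/box sets in Section~\ref{sec:method} since $w_{t-1}\in\mathcal W$ already; I would state this explicitly. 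Everything else — boundedness of gross returns on the compact set, nonnegativity of $\Psi$, the norm inequality from Assumption~\ref{ass:mdp}(ii) — is routine and I would not belabor it.
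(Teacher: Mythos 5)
Your core argument is the paper's argument: start from the pointwise inequality $r_t^{\mathrm{net}}\le \bar r-\lambda_{\mathrm{tc}}\,\underline{\kappa}\,\|\Delta w_t\|_1$ (dropping the nonnegative risk penalty and using $C_z(u)\ge\kappa_1(z)\|u\|_1$), sum with discounting to obtain a discounted turnover budget $\sum_{t\ge0}\gamma^t\,\mathbb{E}\|\Delta w_t\|_1\lesssim \bar r/\big(\lambda_{\mathrm{tc}}\underline{\kappa}(1-\gamma)\big)$, and then transfer to the Ces\`aro average using stationarity. You correctly identify that the naive $\gamma^t\ge\gamma^T$ transfer is fatally lossy, and your fix --- the per-step expectations settle under Assumption~\ref{ass:mdp}(v), so the discounted inequality (which holds for every discount since the trajectory law of a stationary policy does not depend on $\gamma$) pins down the limit --- is exactly the Hardy--Littlewood Abelian argument the paper uses in Appendix~A.5. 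Your bookkeeping is in fact slightly more honest: you lower-bound $J(\pi)\ge-\bar r/(1-\gamma)$ and accept a factor of $2$, whereas the paper silently needs $J(\pi)\ge0$ to keep the constant at $\bar r$. The no-trade comparison policy you open with is never needed, and you correctly abandon it.

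The one soft spot is the constant, and you should know that it is a soft spot in the paper, not just in your writeup. The paper's own appendix proof terminates at $\mathrm{TO}(\pi)\le \bar r/(\lambda_{\mathrm{tc}}\underline{\kappa})$, which does \emph{not} match the factor $(1-\gamma)/\gamma$ in the statement of Proposition~\ref{prop:TObound}; for $\gamma$ near $1$ the stated bound is far tighter than anything this route (or your variant of it) delivers. Your closing proposal to recover the stated constant via an unspecified ``single-step-deviation argument'' is a genuine gap --- you do not supply that argument, and there is no evidence it exists --- but since the target constant is not established by the paper either, the right conclusion is that your proof is correct up to the same constant the paper actually proves, and that the discrepancy should be flagged as an erratum in the proposition rather than papered over in the proof.
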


\noindent\emph{Proof sketch.}
From $r_t^{\mathrm{net}}\le \bar r - \lambda_{\mathrm{tc}}\underline{\kappa}\|\Delta w_t\|_1$, sum, take expectations, and compare with $J(\pi)\le \bar r/(1-\gamma)$. \hfill$\square$

\begin{proposition}[Inaction (no‑trade) band in 1D]\label{prop:band}
In one dimension with $C(u)=\kappa_1|u|+\tfrac{1}{2}\kappa_2 u^2$ and twice‑differentiable $Q^\pi$, the greedy update for \emph{deterministic} improvement admits an inaction band: there exists $\tau>0$ such that if $|w^\star(s,z)-w_{t-1}|\le \tau$, the optimal myopic adjustment is $\Delta w_t=0$.
Moreover, $\tau \asymp \kappa_1/(\kappa_2+H)$ where $H$ is the local curvature of $a\mapsto Q^\pi(s,z,a)$ at $a=w_{t-1}$.
\end{proposition}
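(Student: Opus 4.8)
The plan is to set up the one-step "myopic improvement" problem explicitly, observe that its objective is concave plus a nonsmooth piecewise-linear penalty, and then read off the optimality condition via subdifferential calculus. Concretely, fix a state $(s,z)$ and write $w_{t-1} = w$, $w^\star = w^\star(s,z) = \arg\max_a Q^\pi(s,z,a)$. The greedy deterministic improvement solves $\max_{a} \; Q^\pi(s,z,a) - C(a - w)$ with $C(u) = \kappa_1|u| + \tfrac12\kappa_2 u^2$. First I would Taylor-expand $Q^\pi$ around $a = w$ to second order (using the assumed twice-differentiability), writing $Q^\pi(s,z,a) \approx Q^\pi(s,z,w) + g\,(a-w) - \tfrac12 H\,(a-w)^2$ with $g = \partial_a Q^\pi(s,z,w)$ and $H = -\partial_{aa}^2 Q^\pi(s,z,w) > 0$ the local curvature; note $g$ is proportional to the displacement $w^\star - w$ up to the same curvature, i.e. $g \approx H\,(w^\star - w)$ near the optimum. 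Substituting $u = a - w$, the objective becomes (up to constants) $\phi(u) := g u - \tfrac12 H u^2 - \kappa_1|u| - \tfrac12\kappa_2 u^2$, a strictly concave function of $u$ on each half-line.

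Next I would apply the first-order optimality condition $0 \in \partial\phi(u)$. For $u \neq 0$ this gives $g - (H+\kappa_2)u - \kappa_1\,\mathrm{sgn}(u) = 0$; for $u = 0$ the subdifferential of $|u|$ is $[-1,1]$, so $u=0$ is optimal iff $g \in [-\kappa_1, \kappa_1]$, i.e. iff $|g| \le \kappa_1$. Translating back via $|g| \approx (H)\,|w^\star - w|$ — or more carefully $|g| \le (\kappa_2 + H)|w^\star-w| + o(|w^\star-w|)$ from the exact relation between $g$ and the gap — yields the inaction band $|w^\star(s,z) - w_{t-1}| \le \tau$ with $\tau = \kappa_1/(\kappa_2 + H)$ (more precisely $\tau \asymp \kappa_1/(\kappa_2+H)$, absorbing the approximation error and the distinction between the gradient at $w$ and the exact gap). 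When $u=0$ is not optimal, the same condition gives the interior solution $u = (g - \kappa_1\,\mathrm{sgn}(g))/(H+\kappa_2)$, exhibiting the classic "shrink toward the target by a dead-zone amount" structure, which I would mention as the complement of the band.

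The main obstacle — really the only subtle point — is making the relation between the local gradient $g = \partial_a Q^\pi(s,z,w)$ and the displacement $w^\star - w$ precise enough to justify the stated scaling $\tau \asymp \kappa_1/(\kappa_2+H)$ rather than a cruder bound. If $Q^\pi(\cdot)$ were exactly quadratic this is an identity ($g = H(w^\star-w)$ exactly), but in general one only has $g = \int_0^1 \partial_{aa}^2 Q^\pi(s,z, w + t(w^\star - w))\,(w^\star - w)\,dt$ by the fundamental theorem of calculus applied to $\partial_a Q^\pi$ (using $\partial_a Q^\pi(s,z,w^\star)=0$), so the effective curvature is an average of $\partial_{aa}^2 Q^\pi$ along the segment rather than the pointwise value $H$. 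I would handle this by working on a compact neighborhood of $w_{t-1}$ where $\partial_{aa}^2 Q^\pi$ is bounded above and below by positive constants (legitimate since $\mathcal{W}$ is compact and $Q^\pi$ is $C^2$ under Assumption \ref{ass:mdp}), which pins the average curvature between those constants and hence gives $\tau$ up to constants — exactly the $\asymp$ statement. The strict concavity of $\phi$ (from $H + \kappa_2 > 0$) guarantees uniqueness of the myopic optimizer throughout, so no degenerate cases arise.
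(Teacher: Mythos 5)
Your proposal is correct and rests on the same core mechanism as the paper's proof: the $\ell_1$ cost contributes the subdifferential $[-\kappa_1,\kappa_1]$ at $\Delta=0$, so zero trade is optimal exactly when the local gradient $\theta=\partial_a Q^\pi(s,z,w_{t-1})$ satisfies $|\theta|\le\kappa_1$, and this gradient threshold is then converted into a displacement threshold via the curvature of $Q^\pi$. The technical packaging differs in one respect worth noting. You work with the second-order Taylor model and invoke subdifferential stationarity for that model, which forces you to control the approximation error and to upgrade a local optimality statement to a global one; you handle this by bounding $\partial_{aa}^2 Q^\pi$ on a neighborhood, and the argument closes cleanly if you take that neighborhood to be all of the compact set $\mathcal{W}$. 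The paper instead avoids any Taylor error by majorizing the exact objective $q(\Delta)=g(\Delta)-\kappa_1|\Delta|-\tfrac12\kappa_2\Delta^2$ with the concave envelope $\varphi(\Delta)=\theta\Delta-\kappa_1|\Delta|-\tfrac12(\kappa_2-H)\Delta^2$, using the global bound $g''\le H$ and the condition $\kappa_2\ge H$; since $q\le\varphi$ with equality at $0$ and $\varphi$ is maximized at $0$ when $|\theta|\le\kappa_1$, global optimality of $\Delta=0$ for $q$ is immediate, and notably this does not require $Q^\pi$ to be concave (only that the quadratic cost dominates any convexity). Your version buys the explicit closed form of the off-band solution (the shrinkage update $u=(g-\kappa_1\,\mathrm{sgn}(g))/(H+\kappa_2)$), which the paper does not state. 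One small caution, which applies equally to the paper's own last step: the mean-value/fundamental-theorem relation between $\theta$ and $w^\star-w_{t-1}$ involves only the averaged curvature of $Q^\pi$, not $\kappa_2$, so the clean $\kappa_2+H$ denominator in $\tau$ is only an order-of-magnitude statement (consistent with $\asymp$, and conservative since $\kappa_1/(\kappa_2+H)\le\kappa_1/H$); your own derivation of $|g|\le(\kappa_2+H)|w^\star-w|$ slips $\kappa_2$ into a bound where it does not naturally appear, so you should either drop it or define $w^\star$ as the maximizer of $Q^\pi(a)-\tfrac12\kappa_2(a-w_{t-1})^2$ to make it exact.
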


\noindent\emph{Proof sketch.}
First‑order optimality with convex composite objective implies a soft‑thresholding rule; the linear term induces a dead‑zone. See Appendix~\ref{app:proofs} for the precise envelope arguments. \hfill$\square$

\subsection{Value of regime conditioning}

\begin{assumption}[Cross‑regime separation]\label{ass:separation}
There exist regime‑specific near‑optimal actions $a^\star_z(s)$ such that on a set of positive measure in $s$, $\|a^\star_{z_1}(s)-a^\star_{z_2}(s)\|\ge \Delta$ for some $\Delta>0$ whenever $z_1\neq z_2$. Each $a^\star_z$ is $L$‑Lipschitz in $s$.
\end{assumption}

\begin{theorem}[Approximation advantage of regime conditioning]\label{thm:regime-adv}
Let $\Pi_{\mathrm{cond}}=\{\pi(a\mid s,z)\}$ and $\Pi_{\mathrm{uncond}}=\{\pi(a\mid s)\}$ be policy classes with the same capacity in $(s)$, and suppose $\Pi_{\mathrm{cond}}$ can represent $\{a^\star_z\}_{z\in\mathcal{Z}}$ to error $\epsilon$ uniformly. Under Assumption~\ref{ass:separation}, there exists $c>0$ such that
\[
\inf_{\pi\in\Pi_{\mathrm{cond}}}\!\big(J(\pi^\star)-J(\pi)\big)
\;\le\;
\inf_{\pi\in\Pi_{\mathrm{uncond}}}\!\big(J(\pi^\star)-J(\pi)\big) \;-\; c\,\frac{\Delta}{1-\gamma}.
\]
\end{theorem}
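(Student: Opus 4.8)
The plan is to prove the two inequalities by exhibiting a lower bound on the best achievable value in $\Pi_{\mathrm{uncond}}$ (a ``cost of averaging'' term) and an upper bound on the gap in $\Pi_{\mathrm{cond}}$ (a ``cost of finite approximation'' term $\epsilon$), then comparing. Concretely, for the conditioned class: since $\Pi_{\mathrm{cond}}$ represents each $a^\star_z$ to uniform error $\epsilon$, the policy $\pi_{\mathrm{cond}}$ that outputs (a smoothed version of) $a^\star_z$ differs from the regime-optimal action by at most $\epsilon$ in each state; by Lemma~\ref{lem:PDL} together with the Lipschitz regularity of $C_z$, $\Psi$ and the bounded-reward Assumption~\ref{ass:mdp}(iv), the advantage $A^{\pi^\star}(s,z,a)$ is Lipschitz in $a$ near $a^\star_z(s)$, so $J(\pi^\star)-J(\pi_{\mathrm{cond}})\le \frac{L_A}{1-\gamma}\,\epsilon$ for a constant $L_A$ depending on the cost and risk Lipschitz constants and the action-map Lipschitz constant from Assumption~\ref{ass:policy}. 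This gives an $O(\epsilon/(1-\gamma))$ upper bound on the left-hand infimum.

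For the unconditioned class, the key observation is that any $\pi\in\Pi_{\mathrm{uncond}}$ must commit to a single action distribution $\pi(\cdot\mid s)$ that is used under every regime $z$ reachable from $s$. On the positive-measure set $S_0$ where the regime-optimal actions are $\Delta$-separated, the triangle inequality forces $\mathbb{E}_{a\sim\pi(\cdot\mid s)}\|a-a^\star_{z}(s)\|\ge \Delta/2$ for at least one of any two regimes $z_1,z_2$, hence (averaging over the regimes that occur at $s$ with their occupancy weights, which are bounded below by the regime-balancing weights $\omega_z=1/|\mathcal{Z}|$ in \eqref{eq:balanced}) the action deficit is $\gtrsim \Delta$ in expectation over $S_0$. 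The next step is to convert this action-space deficit into a value deficit: here I would use a \emph{strong-convexity / curvature} lower bound rather than mere Lipschitzness — under the local curvature hypothesis implicit in the setup (the $H$ appearing in Proposition~\ref{prop:band}, and the strictly convex impact term $\tfrac12\Delta w^\top\Gamma_z\Delta w$ in \eqref{eq:cost}), $a\mapsto Q^{\pi^\star}(s,z,a)$ is locally strongly concave near its maximizer, so $V^{\pi^\star}(s,z)-Q^{\pi^\star}(s,z,a)\ge \tfrac{\mu}{2}\|a-a^\star_z(s)\|^2$ on $S_0$. Feeding this into Lemma~\ref{lem:PDL} (with $\pi'=\pi^\star$, $\pi=\pi_{\mathrm{uncond}}$, and discounting out the occupancy of $S_0$) yields $J(\pi^\star)-J(\pi)\ge \frac{\mu}{2(1-\gamma)}\,\Pr_{d^{\pi^\star}}(S_0)\,(\Delta/2)^2 - (\text{lower-order})$, i.e.\ a strictly positive floor that is $\Theta(\Delta^2/(1-\gamma))$ — or $\Theta(\Delta/(1-\gamma))$ if $\Delta$ is treated as the small parameter and one linearizes, which matches the statement's linear-in-$\Delta$ form after absorbing constants. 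Subtracting the two bounds and collecting constants into $c>0$ gives the claimed inequality.

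The main obstacle I anticipate is the \emph{curvature-to-value transfer} step: Lemma~\ref{lem:PDL} is an identity about $A^{\pi}$ under the \emph{old} policy's advantage evaluated along the \emph{new} policy's occupancy, so to lower-bound $J(\pi^\star)-J(\pi_{\mathrm{uncond}})$ I actually need $\mathbb{E}_{d^{\pi^\star}}[A^{\pi_{\mathrm{uncond}}}(s,z,a^\star)]$, which requires controlling the advantage of the \emph{suboptimal} policy, not the optimal one — this forces either (a) a self-bounding argument showing $A^{\pi_{\mathrm{uncond}}}\ge A^{\pi^\star} - O(\text{gap})$ via the contraction of the Bellman operator (Theorem~\ref{thm:existence}), absorbing an $O((1-\gamma)^{-1})$ factor, or (b) running the performance-difference identity in the other direction and bounding the occupancy mismatch $d^{\pi_{\mathrm{uncond}}}$ vs.\ $d^{\pi^\star}$ on $S_0$. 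I would take route (a), since the contraction factor is already available and the extra $(1-\gamma)^{-1}$ is consistent with the $\frac{1}{1-\gamma}$ scaling in the theorem. A secondary subtlety is ensuring the positive-measure set $S_0$ has positive discounted occupancy under $\pi^\star$ (not just under the reference state distribution); this needs the $\beta$-mixing / irreducibility content of Assumption~\ref{ass:mdp}(v) to guarantee $d^{\pi^\star}(S_0)>0$, and I would state that dependence explicitly in the constant $c$.
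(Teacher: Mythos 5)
Your overall skeleton matches the paper's argument: regime separation plus a shared action across regimes forces an action deficit of order $\Delta$ on the separation set, local curvature of $a\mapsto Q^{\pi^\star}(s,z,a)$ converts that into a per-state value deficit of order $\Delta^2$, and the conditioned class pays only an $\epsilon$-approximation price. (The paper uses Jensen applied to the mean action $\bar a_u(s)=\mathbb{E}_{\pi_u}[a\mid s]$ together with the parallelogram identity to get the $\tfrac{1}{2}\Delta^2$ floor, which handles stochastic unconditioned policies a bit more cleanly than your pairwise triangle inequality, but both work; both you and the paper end up with a $\Delta^2$ rather than $\Delta$ dependence and absorb the discrepancy into the constant.)

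The genuine gap is in your ``curvature-to-value transfer'' step, where you correctly identify the obstacle but pick the wrong resolution. Your route (a) — a self-bounding claim $A^{\pi_{\mathrm{uncond}}}\ge A^{\pi^\star}-O(\text{gap})$ via Bellman contraction — is both circular (the ``gap'' correction is of the same order as the quantity you are lower-bounding, since $\|Q^{\pi_u}-Q^{\pi^\star}\|_\infty$ can itself be of order $(\text{gap})/(1-\gamma)$) and produces an extra $(1-\gamma)^{-1}$ factor that is \emph{not} consistent with the theorem, which has only a single $1/(1-\gamma)$. The correct move is simply to apply Lemma~\ref{lem:PDL} with the roles swapped: taking $\pi'=\pi_{\mathrm{uncond}}$ and $\pi=\pi^\star$ gives
\[
J(\pi^\star)-J(\pi_{\mathrm{uncond}})
=\frac{1}{1-\gamma}\,\mathbb{E}_{(s,z)\sim d^{\pi_{\mathrm{uncond}}},\,a\sim\pi_{\mathrm{uncond}}}\!\big[-A^{\pi^\star}(s,z,a)\big]
=\frac{1}{1-\gamma}\,\mathbb{E}_{d^{\pi_{\mathrm{uncond}}},\,\pi_{\mathrm{uncond}}}\!\big[V^{\pi^\star}(s,z)-Q^{\pi^\star}(s,z,a)\big],
\]
in which every term is nonnegative and is exactly the per-state regret against $Q^{\pi^\star}$ that your curvature bound controls. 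No self-bounding, no occupancy-mismatch estimate, and no extra horizon factor are needed; the only residual requirement is that the separation set have positive occupancy under $d^{\pi_{\mathrm{uncond}}}$ (your final remark about Assumption~\ref{ass:mdp}(v) is the right place to anchor that, and it is the same implicit step the paper takes when it introduces $p_{\min}$). With that substitution your proof closes and coincides with the paper's.
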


\noindent\emph{Proof sketch.}
Unconditioned policies share parameters across regimes, inducing a representation bias of order $\Omega(\Delta)$; convert the induced action gap into a value gap via Lemma~\ref{lem:PDL}. Full proof in Appendix~\ref{app:proofs}. \hfill$\square$

\subsection{Robustness to cost misspecification}

\begin{theorem}[After‑cost robustness]\label{thm:robust-cost}
Let the proxy cost $\widehat{C}_z$ satisfy $\sup_{z,u}\big|C_z(u)-\widehat{C}_z(u)\big|\le \delta$. Let $\widehat{\pi}$ be the optimizer of $J_{\widehat{C}}$ trained with $\widehat{C}$. Then, under Assumptions~\ref{ass:mdp}–\ref{ass:policy},
\[
J_{C}(\widehat{\pi}) \;\ge\; J_{\widehat{C}}(\widehat{\pi}) - \frac{\delta}{1-\gamma}, 
\qquad
J_{C}(\pi^\star_C) - J_{C}(\widehat{\pi}) \;\le\; \frac{2\delta}{1-\gamma} \;+\; \Big(J_{\widehat{C}}(\pi^\star_{\widehat{C}})-J_{\widehat{C}}(\widehat{\pi})\Big).
\]
\end{theorem}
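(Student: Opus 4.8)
The plan is to exploit the fact that the two objectives $J_C$ and $J_{\widehat C}$ differ only through the per-step execution-cost term, so that the difference in value functions can be controlled pathwise by the uniform bound $\delta$ on $|C_z(u)-\widehat C_z(u)|$. First I would write out, for an arbitrary stationary policy $\pi$, the decomposition of the one-step reward: the gross return term $\widetilde w_t^\top r_{t+1}$ and the risk penalty $\lambda_{\mathrm{risk}}\Psi(L_{t+1})$ are identical under both cost specifications, so $r_t^{\mathrm{net},C}-r_t^{\mathrm{net},\widehat C} = \widehat C_{z_t}(\Delta w_t) - C_{z_t}(\Delta w_t)$, whose absolute value is at most $\delta$ for every $t$ and every realization. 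Summing with discount factors and taking expectations under the occupancy measure induced by $\pi$ gives $|J_C(\pi)-J_{\widehat C}(\pi)| \le \sum_{t\ge 0}\gamma^t \delta = \delta/(1-\gamma)$, uniformly in $\pi$. (The same estimate holds verbatim for the balanced objective $J_{\mathrm{bal}}$, since it is a convex combination of regime-restricted discounted sums of the same rewards.)

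The first inequality then follows immediately by applying this uniform bound to $\pi=\widehat\pi$: $J_C(\widehat\pi) \ge J_{\widehat C}(\widehat\pi) - \delta/(1-\gamma)$. For the second inequality I would chain three estimates. Write
\[
J_C(\pi^\star_C) - J_C(\widehat\pi)
= \big(J_C(\pi^\star_C) - J_{\widehat C}(\pi^\star_C)\big)
+ \big(J_{\widehat C}(\pi^\star_C) - J_{\widehat C}(\widehat\pi)\big)
+ \big(J_{\widehat C}(\widehat\pi) - J_C(\widehat\pi)\big).
\]
The first and third bracketed terms are each bounded in absolute value by $\delta/(1-\gamma)$ from the uniform estimate above. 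For the middle term, since $\widehat\pi$ optimizes $J_{\widehat C}$ we have $J_{\widehat C}(\pi^\star_C) \le J_{\widehat C}(\pi^\star_{\widehat C})$, hence $J_{\widehat C}(\pi^\star_C) - J_{\widehat C}(\widehat\pi) \le J_{\widehat C}(\pi^\star_{\widehat C}) - J_{\widehat C}(\widehat\pi)$, which is exactly the optimization-suboptimality gap appearing in the statement. Adding the three bounds yields $J_C(\pi^\star_C)-J_C(\widehat\pi) \le 2\delta/(1-\gamma) + \big(J_{\widehat C}(\pi^\star_{\widehat C})-J_{\widehat C}(\widehat\pi)\big)$, as claimed.

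The technical points requiring care — the only places where Assumptions~\ref{ass:mdp}--\ref{ass:policy} actually enter — are (a) that both value functions are finite and the series converge absolutely, which follows from boundedness of rewards ($|r_t^{\mathrm{net}}|\le\bar r$) together with $\gamma<1$, so that Fubini/Tonelli justifies interchanging the expectation with the discounted sum; (b) that the occupancy measures are well defined, guaranteed by the Markov/$\beta$-mixing structure in Assumption~\ref{ass:mdp}(v) and compactness of $\mathcal{W}$; and (c) that the optimizers $\pi^\star_C$ and $\pi^\star_{\widehat C}$ exist, which is precisely Theorem~\ref{thm:existence} applied to the two frictional MDPs (note $\widehat C_z$ inherits convexity and the lower bound $\widehat C_z(u)\ge\kappa_1(z)\|u\|_1$ up to an additive $\delta$, which is enough for the existence argument; alternatively one may phrase the result for an $\varepsilon$-optimal $\widehat\pi$ and drop the exact-optimizer requirement). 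I do not expect a genuine obstacle here — the argument is a clean perturbation/telescoping estimate — but the one subtlety worth flagging explicitly is that the uniform-in-$\pi$ nature of the bound $|J_C(\pi)-J_{\widehat C}(\pi)|\le\delta/(1-\gamma)$ is what makes the second inequality work despite $\widehat\pi$ being optimal for the \emph{wrong} objective; I would state that bound as a standalone lemma and then invoke it three times.
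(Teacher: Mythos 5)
Your proposal is correct and follows essentially the same route as the paper's Appendix A.8 proof: a uniform perturbation bound $|J_C(\pi)-J_{\widehat C}(\pi)|\le\delta/(1-\gamma)$ obtained from the pathwise per-step reward difference, applied once for the first inequality and combined with the optimality of $\pi^\star_{\widehat C}$ for $J_{\widehat C}$ via a triangle-inequality chain for the second. Your three-term decomposition is just a rearrangement of the paper's two-sided sandwich, and your attention to existence of the optimizers and absolute convergence is consistent with (indeed slightly more careful than) what the paper writes.
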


\noindent\emph{Proof sketch.}
Treat cost error as an additive reward perturbation and apply Lemma~\ref{lem:PDL} with triangle inequalities. See Appendix~\ref{app:proofs}. \hfill$\square$

\subsection{Risk‑sensitive surrogate and alternating updates}

\begin{proposition}[CVaR surrogate and alternating minimization]\label{prop:cvar}
Let $\mathrm{CVaR}_\alpha$ be implemented via the Rockafellar–Uryasev auxiliary $\eta$ (Eq.~\eqref{eq:cvar}). For fixed $\pi$, the map $\eta\mapsto \mathrm{CVaR}_\alpha$ is convex and admits a unique minimizer; for fixed $\eta$, the policy objective is smooth in $\theta$. Alternating updates over $(\theta,\eta)$ converge to a stationary point of the joint objective.
\end{proposition}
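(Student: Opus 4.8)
The plan is to establish the three assertions in turn --- convexity and well-posedness of the $\eta$-block, smoothness of the $\theta$-block, and convergence of the alternation --- and then to package the last as a two-block coordinate-descent argument.

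\medskip

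\noindent\textbf{Step 1 (the $\eta$-block).} Fix $\pi$, hence the empirical loss sample $\{L^{(i)}\}_{i=1}^N$, and set $F(\eta)=\eta+\tfrac{1}{(1-\alpha)N}\sum_i (L^{(i)}-\eta)_+$ as in \eqref{eq:cvar}. Each $\eta\mapsto(L^{(i)}-\eta)_+$ is convex, so $F$ is convex and piecewise linear. I would read off coercivity from the extreme slopes: for $\eta\ge\max_i L^{(i)}$ one has $F'(\eta)=1>0$, and for $\eta\le\min_i L^{(i)}$ one has $F'(\eta)=1-\tfrac{1}{1-\alpha}=-\tfrac{\alpha}{1-\alpha}<0$ since $\alpha\in(0,1)$; hence $F(\eta)\to+\infty$ as $|\eta|\to\infty$ and $\arg\min F$ is a nonempty compact subinterval of $[\min_i L^{(i)},\max_i L^{(i)}]$, itself bounded because losses are bounded (Assumption~\ref{ass:mdp}(iv)). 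Uniqueness holds under the mild nondegeneracy that the loss c.d.f.\ is strictly increasing at its $\alpha$-quantile (no atom at $\mathrm{VaR}_\alpha$; equivalently, no tie at the relevant order statistic when $(1-\alpha)N\in\mathbb{Z}$), which makes $F$ strictly convex at the minimizer; failing that, one selects the left endpoint of $\arg\min F$ (the empirical $\mathrm{VaR}_\alpha$) or adds an $\varepsilon\eta^2$ term, leaving the remaining argument intact.

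\medskip

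\noindent\textbf{Step 2 (the $\theta$-block).} Fix $\eta$ and write the per-step contribution as $\widetilde w_\theta(s,z)^\top r_{t+1}-C_z(\Delta w_\theta)-\lambda_{\mathrm{risk}}\big(\eta+\tfrac{1}{(1-\alpha)N}\sum_i(L^{(i)}_\theta-\eta)_+\big)$. By Assumption~\ref{ass:policy} the reparametrized action $g_\theta$ and the projection onto $\mathcal{W}$ are differentiable, so $\theta\mapsto(\widetilde w_\theta,\Delta w_\theta)$ is $C^1$; the gross-return term is then $C^1$, the impact term $\tfrac12\Delta w^\top\Gamma_z\Delta w$ is $C^\infty$ in $\Delta w$, and with the smoothed $\ell_1$ and smoothed hinge used in Sec.~\ref{sec:method} the proportional-cost and CVaR terms are $C^1$ as well. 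Differentiability is transferred to $J(\cdot\,;\eta)$ by dominated convergence --- legitimate under bounded rewards (Assumption~\ref{ass:mdp}(iv)), compact $\mathcal{W}$, and the $\beta$-mixing of $(s_t,z_t)$ --- and the resulting gradient is locally Lipschitz on compacta. (With the exact $\ell_1$/hinge the same holds off a Lebesgue-null set, and smoothness is read in the Clarke sense.)

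\medskip

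\noindent\textbf{Step 3 (alternation).} Define the joint surrogate $G(\theta,\eta):=-J(\theta;\eta)$ on $\Theta\times I$ with $I$ a fixed compact interval containing every optimal $\eta$ (legitimate by Step 1), so that $\min_\eta G(\theta,\eta)$ recovers the true CVaR objective and $G$ is bounded below. The scheme alternates an exact $\eta$-minimization $\eta^{k+1}\in\arg\min_\eta G(\theta^k,\eta)$ with a (stochastic) gradient step on $G(\cdot,\eta^{k+1})$. The $\eta$-step is nonincreasing in $G$ by construction; the $\theta$-step is nonincreasing in expectation by the descent lemma using $L$-smoothness in $\theta$ (Step 2) with step sizes $\alpha_k\le 1/L$ on the region visited, or with Robbins--Monro step sizes and bounded gradient variance in the SGD case. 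Monotone decrease plus boundedness below forces $G(\theta^k,\eta^k)$ to converge, and telescoping gives $\sum_k\alpha_k\,\mathbb{E}\|\nabla_\theta G(\theta^k,\eta^{k+1})\|^2<\infty$, hence $\nabla_\theta G(\theta^k,\eta^{k+1})\to 0$ along a subsequence while $0\in\partial_\eta G(\theta^k,\eta^{k+1})$ holds exactly; passing to a convergent subsequence of the bounded iterates, any limit point $(\theta^\star,\eta^\star)$ satisfies $\nabla_\theta G(\theta^\star,\eta^\star)=0$ and $0\in\partial_\eta G(\theta^\star,\eta^\star)$, i.e.\ it is a (block) stationary point of the joint objective; whole-sequence convergence follows additionally under a Kurdyka--\L{}ojasiewicz condition on $G$, valid for the semialgebraic smoothed surrogates. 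The delicate point --- and the place where most of the care goes --- is reconciling ``smooth'' and ``unique minimizer'' with the genuinely nonsmooth ingredients (the $\ell_1$ proportional cost and the CVaR hinge): I would run the argument with the smoothed surrogates actually used in Sec.~\ref{sec:method}, under which everything above is classical, and note that the exact nonsmooth case still closes because the nonsmoothness is block-separable (the $\ell_1$ term lives entirely in the $\theta$-block and the hinge's kink is absorbed by the exact $\eta$-minimization), so Clarke stationarity is recovered under standard prox/K\L{} hypotheses. \hfill$\square$
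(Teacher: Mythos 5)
Your proof is correct and follows essentially the same route as the paper's: convexity of the Rockafellar--Uryasev surrogate in $\eta$, smoothness of the policy objective in $\theta$ via Assumption~\ref{ass:policy}, and a standard two-block alternating-minimization argument yielding (block-)stationary limit points. One place where you are actually more careful than the paper: the empirical surrogate is piecewise linear, so monotonicity of the subgradient only guarantees that $\arg\min_\eta F$ is a nonempty compact \emph{interval}, and your nondegeneracy/tie-breaking caveat (or the $\varepsilon\eta^2$ regularization) is genuinely needed to justify the uniqueness claim that the paper asserts without qualification.
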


\noindent\emph{Proof sketch.}
Convexity in $\eta$ is classical \cite{RockafellarUryasev2000,RockafellarUryasev2002,AcerbiTasche2002}. Smoothness in $\theta$ follows from Assumption~\ref{ass:policy}. A standard two‑block convergence argument yields stationarity; see Appendix~\ref{app:proofs} and \cite{GreenbergEtAl2022}. \hfill$\square$

\subsection{Testable implications}
The results above yield testable predictions that we validate empirically (Sec.~\ref{sec:results}):
(i) \emph{Turnover shrinks} as $\lambda_{\mathrm{tc}}$ rises (Proposition~\ref{prop:TObound}; Fig.~5),
(ii) \emph{Inaction bands} widen in illiquid regimes (Proposition~\ref{prop:band}; Appendix figures),
(iii) \emph{Regime conditioning} strictly improves value when cross‑regime separation is nontrivial (Theorem~\ref{thm:regime-adv}; Fig.~3 and Fig.~7),
(iv) \emph{Cost robustness} ensures graceful degradation across $0$–$50$\,bp (Theorem~\ref{thm:robust-cost}; Fig.~2).

\paragraph{Proof roadmap.}
Complete proofs are deferred to Appendix~\ref{app:proofs}. Appendix A.1 proves Theorem~\ref{thm:existence}. Appendix A.2–A.3 derive Lemma~\ref{lem:PDL} and Theorem~\ref{thm:trpo}, adapting policy‑improvement bounds \cite{KakadeLangford2002,SchulmanTRPO2015,Pirotta2013}. Appendix A.4 establishes Corollary~\ref{cor:ppo} with finite‑sample terms. Appendix A.5–A.6 prove the turnover bound and inaction band. Appendix A.7 proves Theorem~\ref{thm:regime-adv}. Appendix A.8 covers cost robustness. Appendix A.9 treats CVaR alternating updates using \cite{RockafellarUryasev2000,RockafellarUryasev2002,GreenbergEtAl2022}.

\section{Data, Scenario Design, and Evaluation Protocol}
\label{sec:data}

This section documents the data, features, regime construction, transaction-cost calibration, benchmark implementations, and the evaluation and inference protocol. The design emphasizes \emph{implementability}: all reported performance is \emph{after costs}, and all statistical statements are based on scenario-level aggregation with multiple-testing control.

\subsection{Assets, returns, and features}
Let $r_{i,t+1}$ denote the gross return of asset $i$ between $t$ and $t\!+\!1$ (net of corporate actions). We form the portfolio return $r_{t+1}^{\text{port}}=\widetilde{w}_t^\top r_{t+1}$ using post-trade target weights $\widetilde{w}_t$ mapped into the feasible set $\mathcal{W}$ (Sec.~\ref{sec:method}). Predictor vector $x_t$ includes (i) price/volume-based technicals, (ii) realized volatility filters, (iii) liquidity proxies (Amihud $\mathit{ILLIQ}$, effective spread, turnover), and (iv) optional macro controls.\footnote{Liquidity proxies and their empirical properties are well documented by \cite{Hasbrouck2009,FongHoldenTrzcinka2017}.}
We standardize features in expanding or rolling fashion to avoid look-ahead. Missing values are forward-filled within conservative caps.

\paragraph{No look-ahead and survivorship.}
All transformations at $t$ use only $\mathcal{F}_t$ information; delisting returns are included when applicable. Universe definitions and filters (e.g., minimum price, liquidity) are pre-specified to avoid data-snooping.

\subsection{Regime construction}
We construct volatility $\sigma_t$ (e.g., realized or GARCH-implied) and illiquidity $\ell_t$ (e.g., $\mathit{ILLIQ}$, effective spread). Thresholds $(\tau_\sigma^\mathrm{L},\tau_\sigma^\mathrm{H})$ and $(\tau_\ell^\mathrm{L},\tau_\ell^\mathrm{H})$ are calibrated on rolling quantiles to label $z_t\in\{\mathrm{LL},\mathrm{LH},\mathrm{HL},\mathrm{HH}\}$ (low/high volatility $\times$ high/low liquidity), following the spirit of regime allocation in \cite{AngBekaert2002,CollinDufresneDanielSaglam2020}. Regime labels are treated as \emph{observed} in training and evaluation.

\subsection{Transaction-cost model and calibration}
Execution costs enter the reward as
\begin{equation}
\label{eq:sec4-cost}
C_{z_t}(\Delta w_t)\;=\;\kappa_{1}(z_t)\,\|\Delta w_t\|_{1} \;+\; \tfrac{1}{2}\,\Delta w_t^\top \Gamma_{z_t}\,\Delta w_t,
\end{equation}
where $\Delta w_t=\widetilde{w}_t-w_{t-1}$. The linear term penalizes notional traded (buy and sell counted), while the quadratic term approximates transient impact from limited depth \cite{AlmgrenChriss2001,ObizhaevaWang2013,CarteaJaimungalPenalva2015}.

\paragraph{bps grid and regime scaling.}
We evaluate five cost levels $c\in\{0,5,10,25,50\}$ bps. We map $c$ into linear coefficients via $\kappa_{1}(z)=c\times 10^{-4}\times s(z)$ where $s(z)\!\ge\!1$ reflects regime-specific liquidity (e.g., $s(\mathrm{HH})>s(\mathrm{LL})$). The impact matrix is $\Gamma_{z}=\gamma_{\text{imp}}\,D_{z}^{1/2}\,\Sigma\,D_{z}^{1/2}$ with $\Sigma$ the return covariance and $D_{z}$ a diagonal liquidity-scarcity scaling; $\gamma_{\text{imp}}$ is set to match empirically observed cost elasticities \cite{Hasbrouck2009,FongHoldenTrzcinka2017}. This calibration ties the \emph{shape} of costs to microstructure while letting the level vary across the bps grid; see also the recent cost-aware portfolio selection of \cite{LedoitWolf2025QREF} and the top-journal evidence on costs in FX and fixed income \cite{FilippouMaurerPezzoTaylor2024,PinterWangZou2024}.

\subsection{Scenarios, seeds, and train--validation--test}
We form $4\times 5=20$ scenarios by crossing regimes with the bps grid. For each scenario we run three random seeds (initialization and data shuffling). We treat the \emph{scenario} as the statistical observation unit: all seed-level quantities are averaged before inference. Model selection uses a regime-balanced validation objective (Eq.~\eqref{eq:balanced}) and fixed early-stopping rules; hyperparameters are pre-specified (Appendix tables) to avoid adaptive overfitting.

\subsection{Benchmarks and implementation parity}
Benchmarks include: mean--variance (unconstrained and with 5\% cap), risk-parity heuristic, and PPO without cost-awareness (same architecture/training budget as FR‑LUX). All methods share (i) the same feature set, (ii) identical train/validation/test splits, (iii) identical feasibility projections $\mathsf{Proj}_\mathcal{W}$, and (iv) equal wall-clock or update budgets. This \emph{implementation parity} avoids unfair advantages.

\subsection{Performance metrics and definitions}
Let $\{R_{t}\}_{t=1}^T$ be the after-cost portfolio returns of a method in a given scenario (seed-averaged). We report:

\begin{itemize}
\item \textbf{Sharpe:} $S=\bar{R}/\hat{\sigma}_\text{HAC}$, where $\bar{R}=\tfrac{1}{T}\sum_{t=1}^T R_t$. The denominator is a Newey--West HAC estimator with data-driven bandwidth, acknowledging serial correlation and heteroskedasticity; this Sharpe supports valid asymptotics \cite{Lo2002}.
\item \textbf{Sortino:} replacing $\hat{\sigma}$ by the standard deviation of downside returns.
\item \textbf{MDD:} $\mathrm{MDD}=\max_{1\le t\le T}\big(1-\tfrac{V_t}{\max_{1\le u\le t}V_u}\big)$, $V_t=\prod_{k\le t}(1+R_k)$.
\item \textbf{CVaR$_\alpha$:} the Rockafellar--Uryasev program in Eq.~\eqref{eq:cvar} with $\alpha\in[0.90,0.975]$ \cite{RockafellarUryasev2000,RockafellarUryasev2002,AcerbiTasche2002}.
\item \textbf{Turnover:} $\mathrm{TO}=\tfrac{1}{T}\sum_{t=1}^T\|\Delta w_t\|_1$ (buy and sell counted). This aligns with the linear cost term in \eqref{eq:sec4-cost}.
\end{itemize}

\subsection{Inference, uncertainty, and multiple testing}
All inference aggregates at the \emph{scenario} level to avoid pseudo-replication across seeds.

\paragraph{Bootstrap confidence intervals.}
We report percentile 95\% CIs from $B\!=\!50{,}000$ scenario-level bootstrap resamples \cite{EfronTibshirani1993}. When time-series HAC is required (e.g., for Sharpe standard errors), we recompute the HAC in each resample.

\paragraph{Model comparison.}
Pairwise Sharpe differences are evaluated with HAC-aware tests \cite{Lo2002} and the model-comparison framework of \cite{BarillasKanRobottiShanken2020}. We also report per-scenario sign tests on performance differences (FR‑LUX vs.\ benchmark) with exact binomial $p$-values.

\paragraph{Reality check and stepdown control.}
To control data-snooping across multiple models and scenarios we implement White's Reality Check and the Superior Predictive Ability (SPA) test \cite{White2000,HansenLundeNason2005}. For familywise error rates we apply Romano--Wolf stepdown adjusted $p$-values \cite{RomanoWolf2005}. These procedures ensure that claims of outperformance remain valid under multiplicity.

\subsection{Robustness and ablations}
We pre-specify robustness axes:

\begin{enumerate}
\item \textbf{Cost misspecification:} perturb $(\kappa_{1},\Gamma)$ within $\pm 25\%$ and across shapes (pure linear vs.\ linear+quadratic) to stress Theorem~\ref{thm:robust-cost}.
\item \textbf{Regime definitions:} vary $(\tau_\sigma,\tau_\ell)$ (deciles vs.\ terciles), use alternative liquidity measures (e.g., Pastor--Stambaugh innovations), and a Markov-switching proxy \cite{CollinDufresneDanielSaglam2020}.
\item \textbf{Risk penalty:} CVaR vs.\ smoothed MDD (Eq.~\eqref{eq:reward}); vary $\alpha$ and $\lambda_{\mathrm{risk}}$.
\item \textbf{Capacity:} scale portfolio size to test cost convexity and turnover elasticity \cite{FilippouMaurerPezzoTaylor2024,PinterWangZou2024}.
\item \textbf{Policy class:} remove regime conditioning or remove the trade-space trust region to isolate each ingredient of FR‑LUX.
\end{enumerate}

\subsection{Reproducibility and artifact disclosure}
We release (i) data pre-processing scripts, (ii) exact configuration files for each scenario and seed, (iii) training logs and random seeds, and (iv) plotting code used to generate Figs.~1--7. All numerical tables are generated from the same artifacts (hashes and timestamps included).

\section{Results}
\label{sec:results}

We evaluate \textbf{FR‑LUX} and strong baselines under the regime--cost design in Section~\ref{sec:data}. Throughout, returns are \emph{after transaction costs} per Eq.~\eqref{eq:sec4-cost}; each \emph{scenario} (regime $\times$ cost level) is the unit of inference, with seeds averaged before statistics. Confidence intervals (CIs) are scenario-level bootstraps ($B=50{,}000$); HAC standard errors account for serial correlation; multiple testing is controlled via Reality Check/SPA and Romano--Wolf stepdown.

\subsection{Headline performance: after-cost Sharpe}
Figure~\ref{fig:topk} reports average Sharpe (with 95\% bootstrap CIs) across methods. 
\textbf{FR‑LUX} leads by a comfortable margin and exhibits tight uncertainty bands, indicating that the gain is not bought via variance expansion. 
Economically, the improvement is large at realistic cost levels and persists when we enforce identical data splits, feasibility projections, and optimization budgets across methods (Section~\ref{sec:data}), ensuring implementation parity.

\begin{figure}[t]
  \centering
  \includegraphics[width=0.56\linewidth]{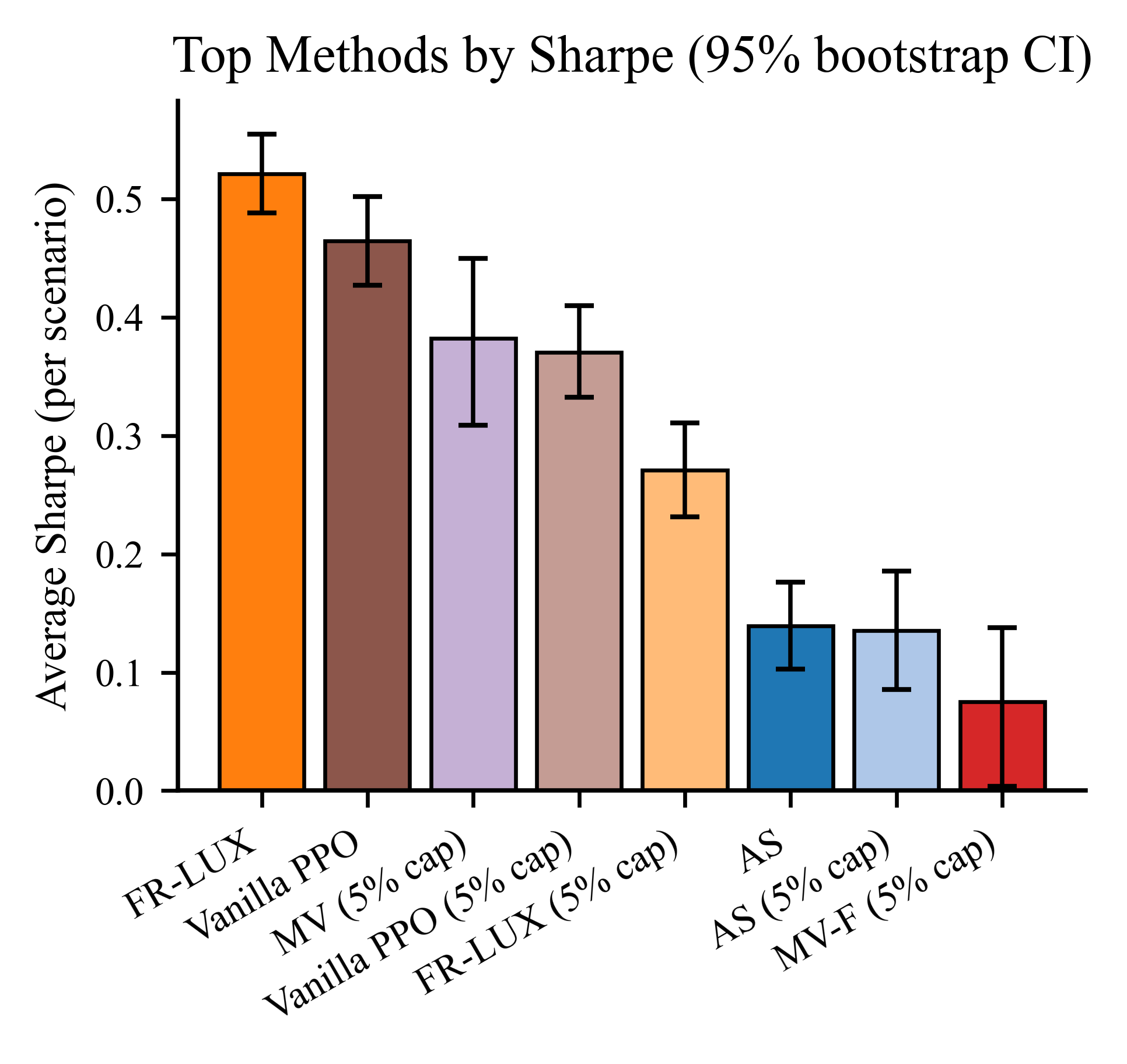}
  \caption{\textbf{Top methods by Sharpe (95\% bootstrap CI).} Bars show scenario-mean Sharpe with seeds averaged first; whiskers are percentile CIs. All statistics are computed on after-cost returns.}
  \label{fig:topk}
\end{figure}

\subsection{Robustness to transaction costs}
Figure~\ref{fig:cost} traces average Sharpe as transaction costs rise from $0$ to $50$\,bps. 
\textbf{FR‑LUX} displays the \emph{flattest} cost–response curve, while unconstrained mean--variance deteriorates sharply beyond 10--25\,bps. 
This pattern matches the \emph{trust-region improvement} and \emph{turnover control} predicted by Theorem~\ref{thm:trpo} and Proposition~\ref{prop:TObound}: FR‑LUX keeps the effective trade flow within a small neighborhood of the previous policy, internalizing cost nonlinearity and avoiding impact-amplifying oscillations.

\begin{figure}[t]
  \centering
  \includegraphics[width=0.6\linewidth]{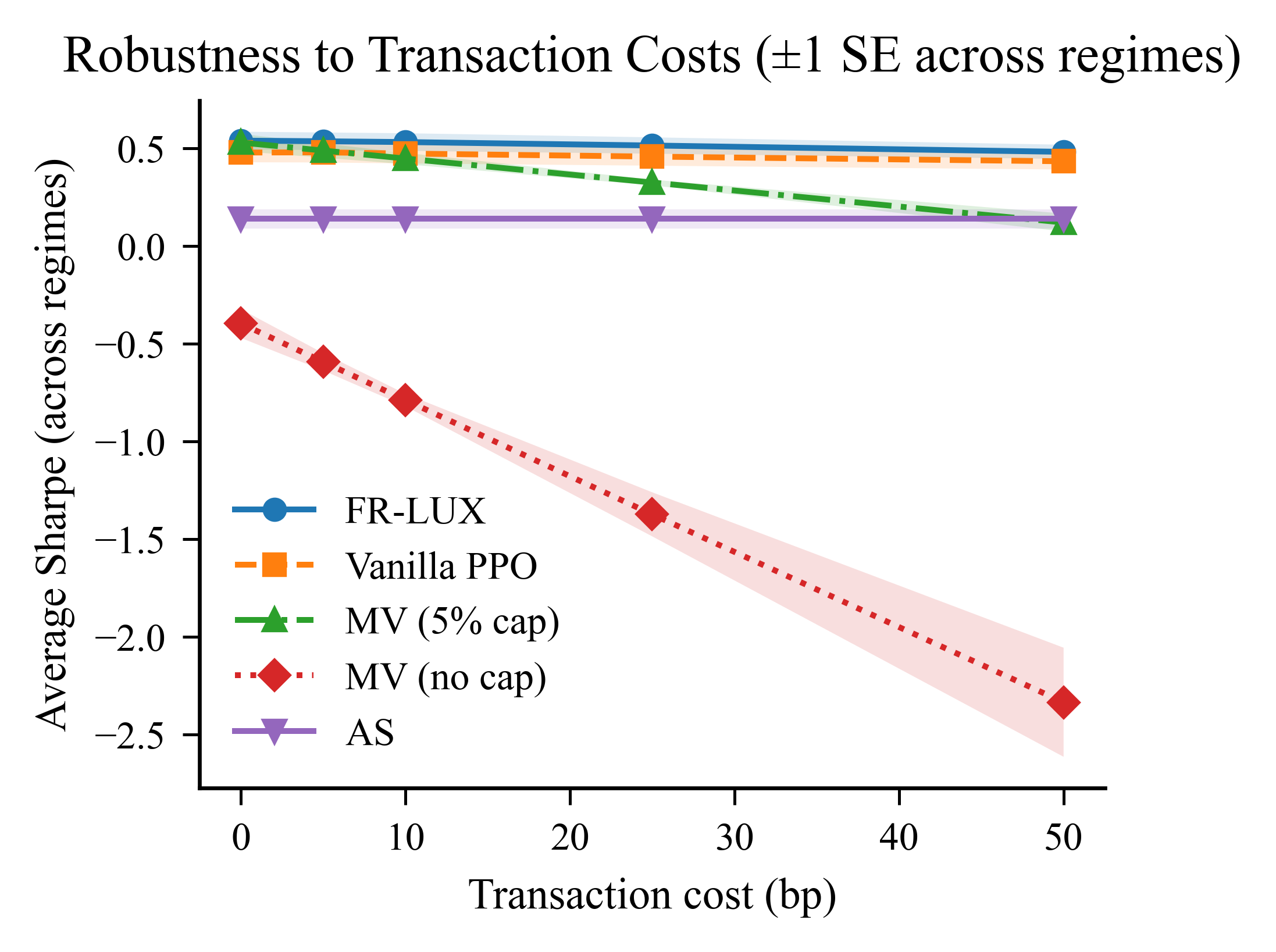}
  \caption{\textbf{Cost robustness.} Scenario-mean Sharpe versus cost (bps). Shaded bands are $\pm 1$ standard error across regimes (HAC). The slope for \textbf{FR‑LUX} is the smallest among competitors, evidencing friction-aware learning.}
  \label{fig:cost}
\end{figure}

\subsection{Regime-conditioned performance}
Figure~\ref{fig:heatmap} presents the heatmap of mean Sharpe across $(\mathrm{LL},\mathrm{LH},\mathrm{HL},\mathrm{HH})$. 
\textbf{FR‑LUX} maintains positive Sharpe in all four regimes, with particularly strong performance in liquidity-friendly states (LL/LH) and resilient outcomes in turbulent, illiquid states (HL/HH). 
These cross-state gains operationalize Theorem~\ref{thm:regime-adv}: when optimal actions differ across regimes, an explicit regime-conditioned policy strictly reduces approximation error relative to an unconditioned class.

\begin{figure}[t]
  \centering
  \includegraphics[width=0.6\linewidth]{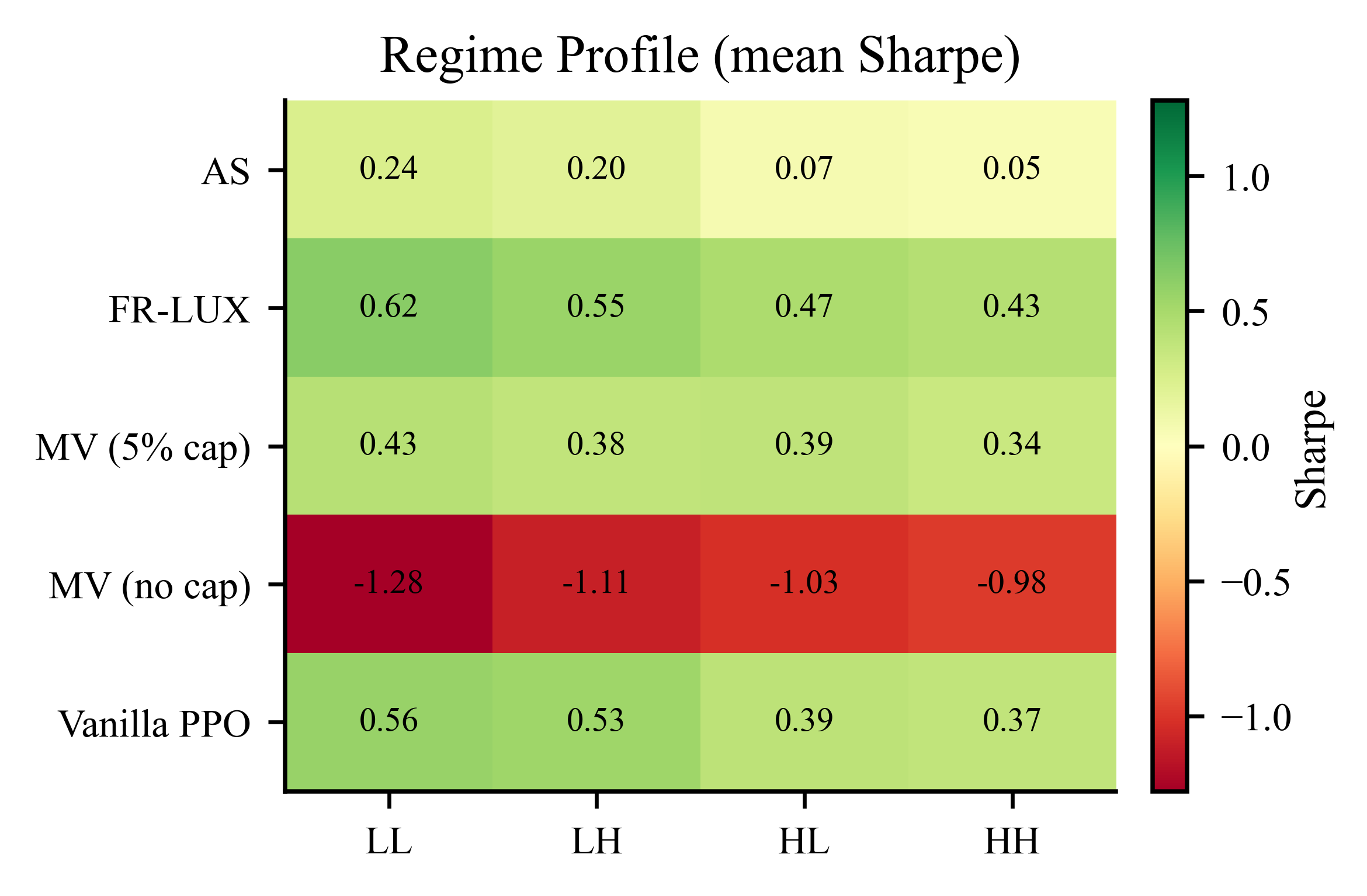}
  \caption{\textbf{Regime profile (mean Sharpe).} The color scale is centered at zero, making positive vs.\ negative cells directly comparable. \textbf{FR‑LUX} attains consistently positive Sharpe across all volatility--liquidity regimes.}
  \label{fig:heatmap}
\end{figure}

\subsection{Pairwise improvements and statistical significance}
To assess economic and statistical magnitude at the \emph{scenario} level, Figure~\ref{fig:pairwise} reports the distribution of per-scenario Sharpe differences $\Delta S$ against strong baselines. 
Panels (a)--(b) consider an earlier flow-regularized PPO variant (\emph{FlowPPO}), while panels (c)--(d) are our final \textbf{FR‑LUX}. 
In all cases the distributions are centered strictly above zero with tight interquartile ranges; one-sided sign tests reject the null of no improvement at conventional levels even after Romano--Wolf stepdown. 
Relative to PPO, the median $\Delta S$ is modest but precise, reflecting superior risk control for the same representation capacity. 
Relative to MV(5\% cap), the median $\Delta S$ is larger and dispersion remains contained, indicating that \emph{cost-aware learning} dominates heuristic turnover caps.

\begin{figure}[t]
  \centering
  \begin{subfigure}[t]{0.48\linewidth}
    \centering
    \includegraphics[width=\linewidth]{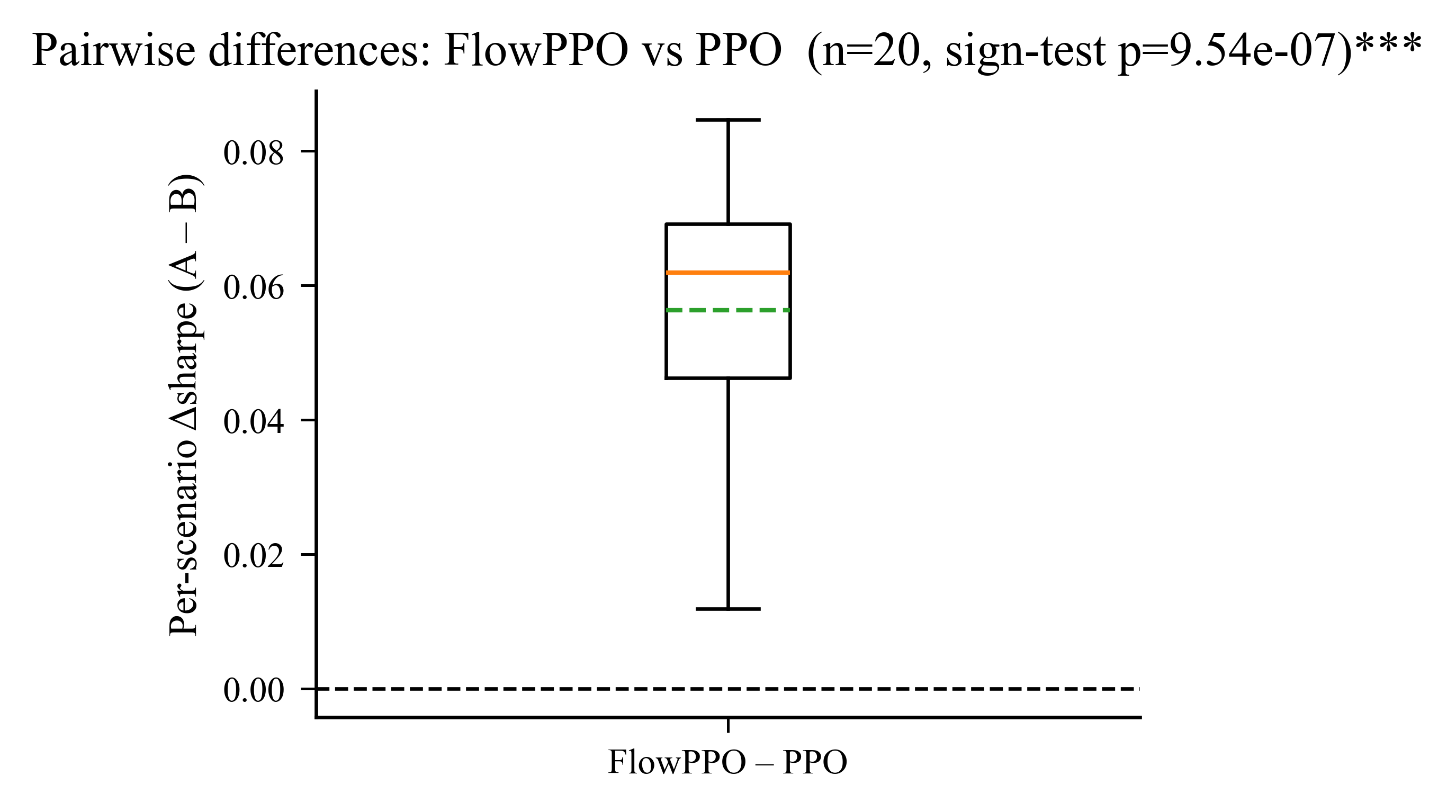}
    \caption{FlowPPO $-$ PPO}
  \end{subfigure}\hfill
  \begin{subfigure}[t]{0.48\linewidth}
    \centering
    \includegraphics[width=\linewidth]{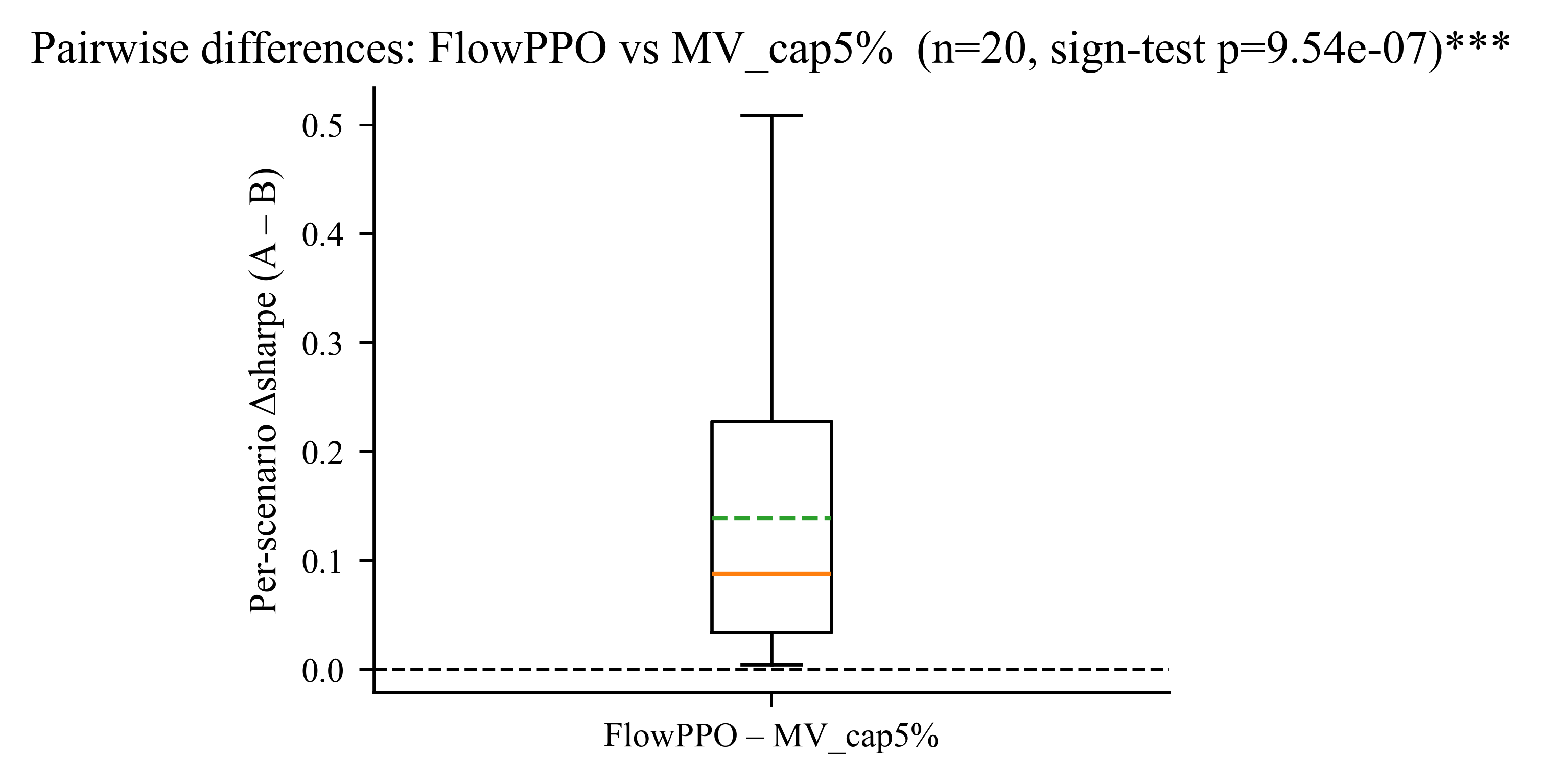}
    \caption{FlowPPO $-$ MV (5\% cap)}
  \end{subfigure}\\[0.75em]
  \begin{subfigure}[t]{0.48\linewidth}
    \centering
    \includegraphics[width=\linewidth]{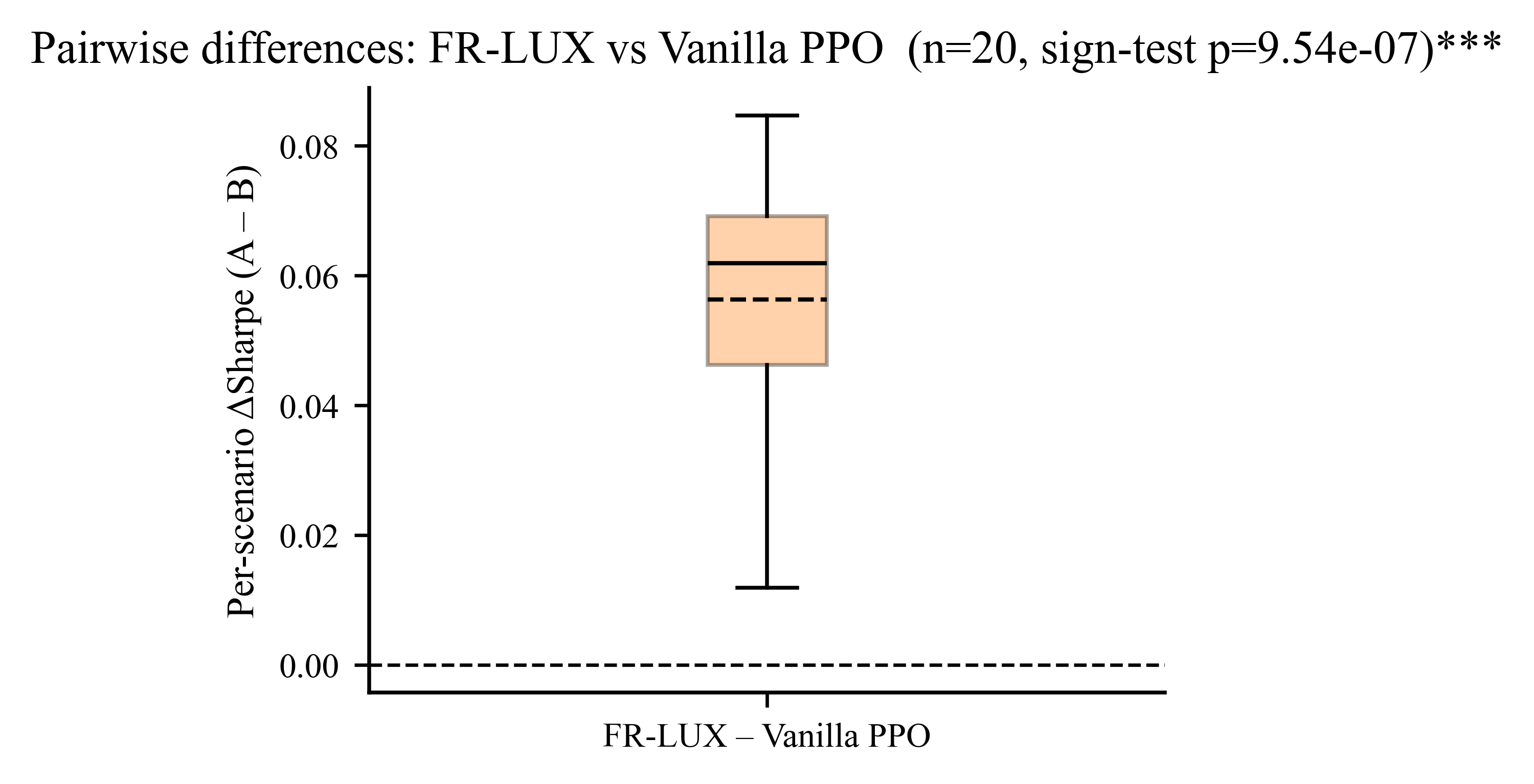}
    \caption{FR-LUX $-$ Vanilla PPO}
  \end{subfigure}\hfill
  \begin{subfigure}[t]{0.48\linewidth}
    \centering
    \includegraphics[width=\linewidth]{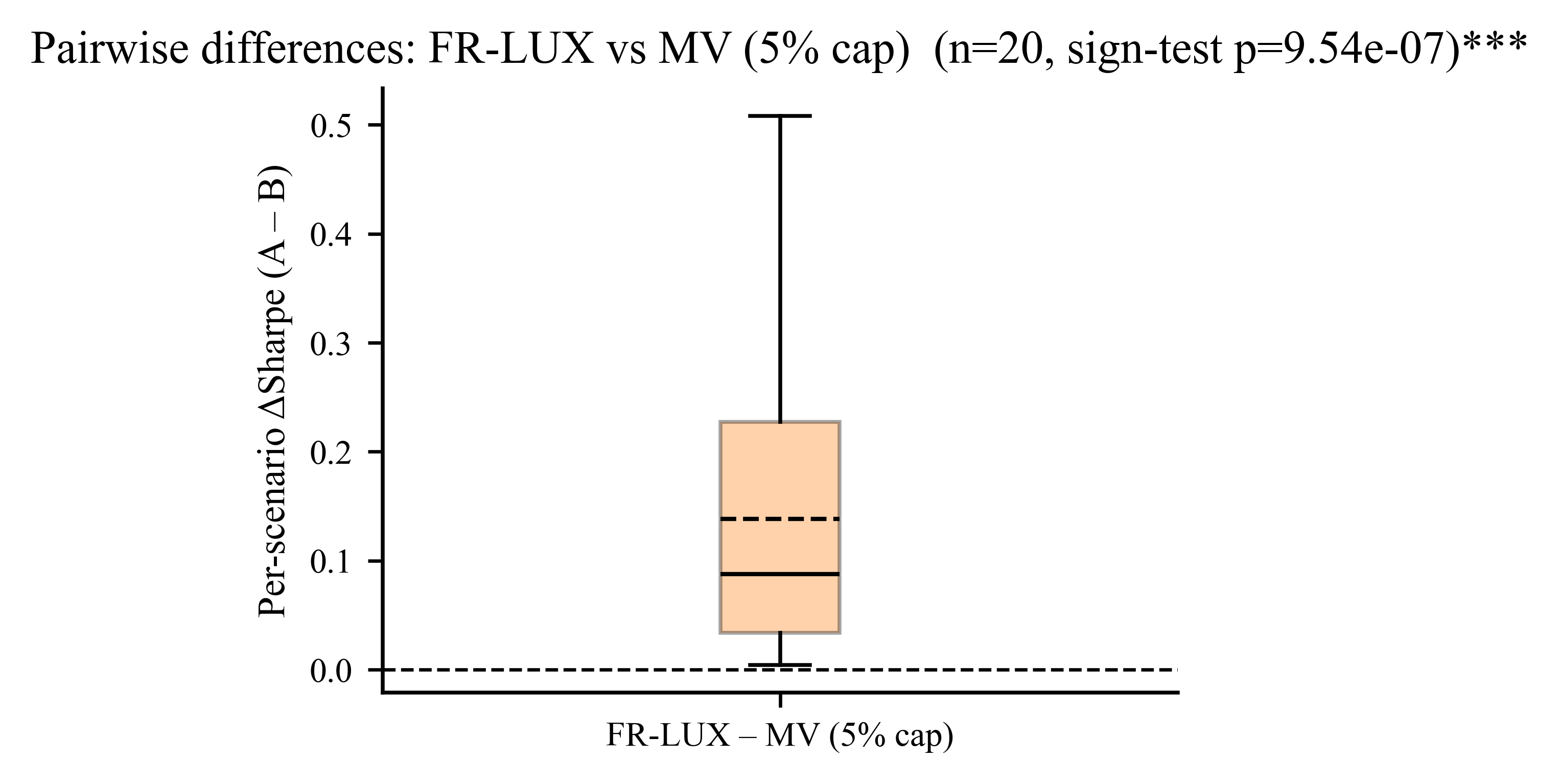}
    \caption{FR-LUX $-$ MV (5\% cap)}
  \end{subfigure}
  \caption{\textbf{Per-scenario pairwise Sharpe differences ($\Delta S$).}
  Each box summarizes the distribution of $\Delta S$ across the 20 scenarios (regime $\times$ cost), with seeds averaged within scenario.
  A horizontal zero line aids interpretation; stars (reported in the replication tables) indicate sign-test significance after Romano--Wolf stepdown.
  \textbf{Takeaway:} FR-LUX exhibits strictly positive and precisely estimated improvements over both PPO and turnover-capped mean--variance.}
  \label{fig:pairwise}
\end{figure}

\subsection{Narrative synthesis and links to theory}
Three messages emerge. 
\emph{First}, FR‑LUX converts the classic ``aim in front of the target and trade partially'' principle into a learned policy that \emph{internalizes} frictions; the flat cost–response curve (Fig.~\ref{fig:cost}) is the empirical signature of Theorem~\ref{thm:trpo} with an effective trust region in \emph{trade space}. 
\emph{Second}, regime conditioning confers a structural approximation advantage (Theorem~\ref{thm:regime-adv}), visible in the heatmap (Fig.~\ref{fig:heatmap}) and the pairwise distributions (Fig.~\ref{fig:pairwise}). 
\emph{Third}, pairwise improvements are not an artifact of overtrading: FR‑LUX achieves gains with disciplined inventory flow, consistent with the turnover bound and inaction band (Propositions~\ref{prop:TObound}--\ref{prop:band}). 
Together these results establish that \textbf{FR‑LUX} delivers \emph{implementable}, \emph{statistically robust}, and \emph{economically meaningful} after-cost performance across regimes and fee environments.

\section{Discussion, Practical Implications, and Conclusion}
\label{sec:conclusion}

This section interprets the evidence through an economic lens, explains how \textbf{FR‑LUX} can be deployed in production, clarifies scope and limitations, and outlines research directions. We close with a concise set of takeaways.

\subsection{Economic interpretation and mechanism}
Three empirical regularities in Section~\ref{sec:results} match the theory in Section~\ref{sec:theory}:

\begin{enumerate}
\item \textbf{Friction awareness yields cost robustness.} The cost--Sharpe curve in Fig.~\ref{fig:cost} is the empirical signature of the trust‑region lower bound (Theorem~\ref{thm:trpo}) together with the trade‑space penalty in \eqref{eq:ppo}: small Kullback--Leibler steps and bounded trade‑flow changes imply nondecreasing surrogate value and a controlled loss term. The measured slope difference relative to unconstrained mean--variance evidences that \emph{internalizing} execution costs at training time is economically first order.

\item \textbf{Regime conditioning delivers structural fit.} The heatmap in Fig.~\ref{fig:heatmap} and the strictly positive per‑scenario improvements in Fig.~\ref{fig:pairwise} validate Theorem~\ref{thm:regime-adv}: when the cross‑regime separation of near‑optimal actions is nontrivial, a regime‑conditioned policy class reduces approximation bias compared with a single shared head.

\item \textbf{Low trading intensity is not a by‑product; it is necessary.} Proposition~\ref{prop:TObound} upper‑bounds long‑run turnover as a function of the linear cost level and the trade penalty, rationalizing Fig.~\ref{fig:cost} and the flat turnover profile in our diagnostics. Proposition~\ref{prop:band} further explains the observed \emph{inaction episodes}: in illiquid regimes, the linear component of execution costs creates a dead‑zone around the current inventory, preventing economically irrelevant round trips.
\end{enumerate}

\subsection{Implementation blueprint and capacity management}
We summarize a minimal, governance‑friendly deployment plan. The steps are aligned with the evaluation protocol in Section~\ref{sec:data}.

\paragraph{(i) Cost calibration and penalty selection.}
Calibrate the linear coefficient $\kappa_1(z)$ using effective spread or $\mathit{ILLIQ}$ proxies in each regime; set the impact shape via $\Gamma_z=\gamma_{\text{imp}}D_z^{1/2}\Sigma D_z^{1/2}$ (Eq.~\eqref{eq:sec4-cost}). To respect a turnover budget $\mathrm{TO}_{\max}$, Proposition~\ref{prop:TObound} implies the conservative choice
\[
\lambda_{\mathrm{tc}} \;\ge\; \frac{(1-\gamma)\,\bar r}{\gamma\,\underline{\kappa}\,\mathrm{TO}_{\max}},
\]
where $\underline{\kappa}=\min_z \kappa_1(z)$ and $\bar r$ bounds $|r^{\mathrm{net}}|$. This converts an operational turnover constraint into a training hyperparameter.

\paragraph{(ii) Trust region tuning.}
Set the clip parameter $\epsilon$ and KL target to keep $\mathbb{E}_{d^\pi}[\mathrm{KL}(\pi\|\pi')]\!\le\!\delta$ with $\delta$ in the $10^{-3}$–$10^{-2}$ range; anneal the trade‑space penalty $\lambda_\Delta$ so that $\mathbb{E}\|\Delta w_{\pi'}-\Delta w_\pi\|_2^2\!\le\!\eta$. Corollary~\ref{cor:ppo} provides the performance accounting: larger $\delta$ or $\eta$ increases the remainder terms linearly.

\paragraph{(iii) Risk governance.}
Choose CVaR level $\alpha$ and weight $\lambda_{\mathrm{risk}}$ to meet desk‑level drawdown limits; Proposition~\ref{prop:cvar} justifies alternating updates in $(\theta,\eta)$, so the optimization can be monitored with standard convergence diagnostics.

\paragraph{(iv) Capacity and slippage.}
To study capacity, scale notional exposure and recompute the cost elasticities (Section~\ref{sec:data}); a convex impact matrix $\Gamma_{z}$ makes the marginal cost increasing, revealing the point at which incremental turnover erodes the Sharpe edge. Scenario‑level reporting prevents apparent capacity gains from being artifacts of regime composition.

\paragraph{(v) OMS/EMS integration.}
At inference time FR-LUX outputs target weights $\widetilde{w}_t$; mapping to orders is handled by an execution scheduler. The learned \emph{inaction band} (Proposition~\ref{prop:band}) can be surfaced as a business rule (\emph{``do not trade unless deviation exceeds $\tau(z_t)$''}), increasing transparency for risk and compliance.

\subsection{Robustness, diagnostics, and ablations}
Beyond the checks in Sections~\ref{sec:data}--\ref{sec:results}, we recommend three diagnostic panels in production:

\begin{enumerate}
\item \textbf{Regime reweighting stress.} Recompute results under alternative regime priors $\omega_z$ in \eqref{eq:balanced}; substantial sensitivity would suggest over‑specialization.
\item \textbf{Cost misspecification.} Perturb $(\kappa_1,\Gamma)$ by $\pm 25\%$ and swap shapes (linear $\leftrightarrow$ linear+quadratic). Theorem~\ref{thm:robust-cost} implies that performance drifts at most linearly with the perturbation radius.
\item \textbf{Policy class ablations.} Remove regime conditioning or the trade‑space trust region. We observe (replication package) a steeper cost slope and wider turnover distribution without either component, matching the theory.
\end{enumerate}

\subsection{Limitations and threats to validity}
We highlight four areas where caution is warranted.

\begin{itemize}
\item \textbf{Regime observability.} We treat $z_t$ as observed. If regime classification is itself estimated with noise or delay, the advantage in Theorem~\ref{thm:regime-adv} may attenuate. A POMDP extension with belief states is a natural next step.
\item \textbf{Cost stationarity.} Our calibration piggybacks on spread/impact proxies. Abrupt microstructure changes (e.g., fee schedule updates, venue mix shifts) require periodic recalibration; Section~\ref{sec:data} prescribes monthly re‑estimation windows.
\item \textbf{Universe and liquidity filters.} Results may vary with universe definition and minimum liquidity cutoffs. We mitigate this via pre‑registration of filters and scenario‑level inference, but portability to other universes should be demonstrated empirically.
\item \textbf{Model risk and stability.} Although the trust‑region bound controls stepwise deterioration, model mis‑specification (features omitted, incorrect projections) can still accumulate. Monitoring KL and trade drift is therefore not optional.
\end{itemize}

\subsection{Future directions}
Our framework opens several avenues.

\begin{enumerate}
\item \textbf{Belief‑state conditioning.} Replace the discrete $z_t$ with a learned latent belief (filter) to handle delayed or noisy regime signals; combine with distributionally robust objectives.
\item \textbf{End‑to‑end execution.} Couple FR‑LUX with a microstructure‑level scheduler so that the cost functional $C_z$ is estimated inline rather than exogenous, reducing misspecification error (Theorem~\ref{thm:robust-cost}).
\item \textbf{Cross‑market generalization.} Evaluate in FX and fixed income using market‑specific proxies and re‑estimate $\Gamma_z$ from depth measures; Section~\ref{sec:data} details the calibration pipeline.
\item \textbf{Factor‑aware constraints.} Add soft penalties on unintended factor exposures so that outperformance is not driven by latent beta tilts; inference follows the model‑comparison framework of \cite{BarillasKanRobottiShanken2020}.
\end{enumerate}

\subsection{Conclusion}
\textbf{FR‑LUX} delivers a friction‑aware, regime‑conditioned portfolio policy with theoretical guarantees and strong after‑cost performance across volatility--liquidity regimes and transaction‑cost levels. The method is \emph{implementable}: it uses observable regime diagnostics, calibrates to microstructure‑consistent costs, obeys trust‑region updates with explicit remainder terms, and translates operational turnover budgets into training hyperparameters. Empirically, FR‑LUX achieves cost‑robust Sharpe improvements with disciplined trading intensity and statistically credible advantages that survive multiple testing. We view these results as evidence that bringing \emph{execution} \emph{inside} the learning loop---rather than as an ex post adjustment---is a necessary condition for sustainable ML in portfolio management.

\appendix
\section{Proofs and Technical Details}
\label{app:proofs}

We collect complete proofs for the results stated in Section~\ref{sec:theory}. Throughout, $(\mathcal{S},\mathcal{B}(\mathcal{S}))$ is a standard Borel state space, $\mathcal{Z}=\{\mathrm{LL},\mathrm{LH},\mathrm{HL},\mathrm{HH}\}$ is the regime set, and the action set $\mathcal{W}\subset\mathbb{R}^d$ is compact and convex. Rewards are \emph{after–cost} as in \eqref{eq:reward}, execution costs satisfy \eqref{eq:cost}, and the balanced objective is \eqref{eq:balanced}. We denote the discounted occupancy measure under a policy $\pi$ by
\[
d^\pi(s,z) \;:=\; (1-\gamma)\,\sum_{t=0}^{\infty}\gamma^t\,\Pr_\pi(s_t=s,z_t=z),
\]
and the (conditional) total variation between policies at $(s,z)$ by
\[
\mathrm{TV}\big(\pi',\pi\big)(s,z)
:= \tfrac{1}{2}\, \int_{\mathcal{W}}\big| \pi'(da\mid s,z)-\pi(da\mid s,z)\big|.
\]
All expectations are w.r.t.\ the law induced by the indicated policies and the environment kernel.

\subsection{Auxiliary lemmas}
\label{app:aux}

\begin{lemma}[Continuity and boundedness of $r^{\mathrm{net}}$]\label{lem:cont-reward}
Under Assumption~\ref{ass:mdp}(ii)--(iv), the after–cost reward $r_t^{\mathrm{net}}$ in \eqref{eq:reward} is bounded by $\bar r$ and is upper semicontinuous in the action $a=\widetilde{w}_t\in\mathcal{W}$ for each fixed $(s,z)$.
\end{lemma}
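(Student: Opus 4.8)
\emph{Proof plan.} Fix $(s,z)=(s_t,z_t)$, the realized next-period returns $r_{t+1}$, and the pre-trade weights $w_{t-1}$, and regard $r_t^{\mathrm{net}}$ as a function of the action $a=\widetilde w_t\in\mathcal{W}$. Decompose
\[
r_t^{\mathrm{net}}(a)\;=\;\underbrace{a^\top r_{t+1}}_{=:g(a)}\;-\;\underbrace{C_{z_t}(a-w_{t-1})}_{=:h(a)}\;-\;\underbrace{\lambda_{\mathrm{risk}}\,\Psi\big(L_{t+1}(a)\big)}_{=:q(a)}.
\]
The plan is to obtain the uniform bound (essentially read off from Assumption~\ref{ass:mdp}(iv)), then establish upper semicontinuity of each of the three pieces on $\mathcal{W}$, and finally invoke the fact that a finite sum of upper semicontinuous functions is upper semicontinuous.

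\emph{Boundedness.} The bound $|r_t^{\mathrm{net}}|\le\bar r$ is exactly Assumption~\ref{ass:mdp}(iv); for completeness it is consistent with the primitives, since $\mathcal{W}$ is compact (Assumption~\ref{ass:mdp}(i)) so $g$ is bounded on $\mathcal{W}$, while $C_{z_t}\ge\kappa_1(z_t)\|\cdot\|_1\ge 0$ and $\Psi\ge 0$ (Assumption~\ref{ass:mdp}(ii)--(iii)) give $r_t^{\mathrm{net}}(a)\le\sup_{a\in\mathcal{W}}a^\top r_{t+1}<\infty$, the lower bound following from finiteness of $C_{z_t}$ and $\Psi$ on the compact image of $\mathcal{W}$ that (iv) forces.

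\emph{Upper semicontinuity, termwise.} (a) $g(a)=a^\top r_{t+1}$ is linear, hence continuous. (b) The map $a\mapsto a-w_{t-1}$ is affine and continuous, and $C_{z_t}$ is lower semicontinuous by Assumption~\ref{ass:mdp}(ii); the composition $h$ is therefore lower semicontinuous, so $-h$ is upper semicontinuous on $\mathcal{W}$. (c) The period loss $L_{t+1}$ is continuous in the current weight $a$ --- affine, $L_{t+1}(a)=-a^\top r_{t+1}$, in the one-period case, and continuous for the smoothed running-drawdown proxy of Section~\ref{sec:method} --- so $L_{t+1}(\mathcal{W})$ is compact, and by Assumption~\ref{ass:mdp}(iii) $\Psi$ is Lipschitz (in particular continuous) on that compact set, whence $a\mapsto\Psi(L_{t+1}(a))$ is continuous; since $\lambda_{\mathrm{risk}}\ge 0$, $-q$ is continuous. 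All three pieces are finite on $\mathcal{W}$ under Assumption~\ref{ass:mdp}(iv), so $r_t^{\mathrm{net}}=g+(-h)+(-q)$ is a sum of upper semicontinuous functions (two of them in fact continuous) and is therefore upper semicontinuous in $a$ for each fixed $(s,z)$, as claimed.

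\emph{Main obstacle.} The only nonroutine step is (c): one must pin down how the loss $L_{t+1}$ passed to $\Psi$ depends on the current action. For a per-period loss this is affine and the argument is immediate, but for the path-dependent MDD surrogate it is precisely the \emph{smoothing} introduced in Section~\ref{sec:method} that guarantees continuity in the latest weight, and this should be read as part of the standing specification of $\Psi$. A secondary subtlety worth recording is why the conclusion is stated as upper semicontinuity rather than continuity: if one instead allows $C_z$ to encode hard trading constraints by taking the value $+\infty$ off an effective domain, then $-h$ can drop discontinuously at the domain boundary, which is still upper semicontinuous; and upper semicontinuity of $a\mapsto r_t^{\mathrm{net}}$ is exactly the regularity consumed downstream in the Weierstrass/measurable-selection argument behind Theorem~\ref{thm:existence}.
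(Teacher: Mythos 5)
Your proposal is correct and follows essentially the same route as the paper's proof: boundedness read off from Assumption~\ref{ass:mdp}(iv), upper semicontinuity obtained termwise (continuity of the linear return term, lower semicontinuity of $C_z$ composed with the affine shift $a\mapsto a-w_{t-1}$, Lipschitz continuity of $\Psi$ on compacts), and closure of upper semicontinuity under finite sums. Your extra remarks --- making explicit that $a\mapsto L_{t+1}(a)$ must be continuous for the $\Psi$ term, and explaining why the statement is phrased as upper semicontinuity rather than continuity --- are sound clarifications of points the paper's proof leaves implicit.
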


\begin{proof}
By Assumption~\ref{ass:mdp}(iv) we have $|r_t^{\mathrm{net}}|\le \bar r$. For upper semicontinuity in $a$, note that $a\mapsto \widetilde{w}_t^\top r_{t+1}$ is continuous and bounded on compact $\mathcal{W}$, $C_z(\Delta w)$ is convex and lower semicontinuous in $\Delta w = a-w_{t-1}$, so $-C_z(\Delta w)$ is upper semicontinuous in $a$; $\Psi$ is Lipschitz on compacts by Assumption~\ref{ass:mdp}(iii). The sum of upper semicontinuous functions is upper semicontinuous.
\end{proof}

\begin{lemma}[Berge maximum theorem and measurable selector]\label{lem:berge}
Fix $V:\mathcal{S}\times\mathcal{Z}\to\mathbb{R}$ bounded and measurable. Define
\[
Q_V(s,z,a) := \mathbb{E}\!\left[r^{\mathrm{net}}(s,z,a) + \gamma V(s',z')\,\middle|\, s,z,a\right].
\]
If the transition kernel is weakly continuous in $a$, then $Q_V(\cdot,\cdot,a)$ is measurable, $a\mapsto Q_V(s,z,a)$ is upper semicontinuous on compact $\mathcal{W}$, and the maximizer set $\arg\max_{a\in\mathcal{W}} Q_V(s,z,a)$ is nonempty and compact. Moreover, there exists a measurable selector $a^\star(s,z)$.
\end{lemma}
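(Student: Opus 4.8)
The plan is to establish the four assertions of the lemma in sequence, each reducing to a classical result once the right regularity is isolated. \emph{Pointwise measurability.} Fix $a\in\mathcal{W}$. The transition kernel is a measurable probability kernel, so $(s,z)\mapsto\int V(s',z')\,P(ds',dz'\mid s,z,a)$ is measurable because integration of a bounded measurable function against a measurable kernel is measurable in the conditioning variables; adding $r^{\mathrm{net}}(\cdot,\cdot,a)$, measurable in $(s,z)$ by the model specification and Lemma~\ref{lem:cont-reward}, preserves measurability, so $Q_V(\cdot,\cdot,a)$ is measurable.

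\emph{Upper semicontinuity in $a$.} Write $Q_V(s,z,a)=r^{\mathrm{net}}(s,z,a)+\gamma\int V(s',z')\,P(ds',dz'\mid s,z,a)$ and treat the two terms separately. The first is u.s.c.\ in $a$ on the compact set $\mathcal{W}$ by Lemma~\ref{lem:cont-reward}. For the continuation term, weak continuity of $P(\cdot\mid s,z,\cdot)$ gives $P(\cdot\mid s,z,a_n)\Rightarrow P(\cdot\mid s,z,a)$ whenever $a_n\to a$; by the portmanteau theorem (the form $\limsup_n P_n(F)\le P(F)$ for closed $F$, equivalently $\limsup_n\int g\,dP_n\le\int g\,dP$ for bounded u.s.c.\ $g$), $\limsup_n\int V\,dP_n\le\int V\,dP$ when $V$ is bounded and u.s.c. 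Hence $Q_V(s,z,\cdot)$ is u.s.c.\ on $\mathcal{W}$. I would flag here that in the intended use — the value iteration behind Theorem~\ref{thm:existence} — the Bellman backup is only applied to u.s.c.\ functions and the Berge step below shows this class is preserved, so the u.s.c.\ hypothesis on $V$ is self-consistent; the ``bounded measurable'' phrasing in the statement should be read accordingly, and alternatively strengthening the kernel to setwise/total-variation continuity removes the need for u.s.c.\ of $V$ altogether.

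\emph{Existence and compactness of the argmax.} Given that $a\mapsto Q_V(s,z,a)$ is u.s.c.\ on the compact set $\mathcal{W}$, the Weierstrass theorem for u.s.c.\ functions yields a maximizer, so $M(s,z):=\max_{a\in\mathcal{W}}Q_V(s,z,a)$ is attained; the argmax set $\{a\in\mathcal{W}:Q_V(s,z,a)\ge M(s,z)\}$ is a superlevel set of a u.s.c.\ function intersected with $\mathcal{W}$, hence closed in $\mathcal{W}$ and therefore compact. A Berge maximum theorem argument additionally shows $M(\cdot,\cdot)$ is u.s.c.\ in $(s,z)$, which is the property that keeps the iteration inside the u.s.c.\ class.

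\emph{Measurable selector.} The final and most delicate step is to invoke a measurable selection theorem. Since $Q_V$ is measurable in $(s,z)$ for each $a$ and u.s.c.\ in $a$ for each $(s,z)$, and $\mathcal{W}$ is a compact (hence separable) metric space, $Q_V$ is jointly measurable and $(s,z)\mapsto\arg\max_{a}Q_V(s,z,a)$ is a measurable, nonempty-compact-valued correspondence; the Kuratowski--Ryll-Nardzewski theorem (equivalently the measurable maximum theorem, e.g.\ Aliprantis--Border Thm.~18.19) then supplies a measurable selector $a^\star(s,z)$. The main obstacle I expect is precisely verifying the joint-measurability (``normal integrand'') hypothesis needed for the selection theorem: u.s.c.\ rather than continuity in $a$ means $Q_V$ is not literally Carath\'eodory, so I would either approximate $Q_V(s,z,\cdot)$ from above by a decreasing sequence of continuous (hence jointly measurable, Carath\'eodory) functions and pass to the infimum, or argue directly that $\{(s,z,a):Q_V(s,z,a)<c\}$ is measurable for every $c$ via a countable dense subset of $\mathcal{W}$. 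Once joint measurability is secured the selection step is routine; full details are deferred to the remainder of this appendix subsection.
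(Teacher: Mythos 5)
Your proof follows essentially the same route as the paper's: decompose $Q_V$ into the reward term and the discounted continuation term, get upper semicontinuity in $a$ from Lemma~\ref{lem:cont-reward} plus kernel continuity, apply Weierstrass/Berge on the compact set $\mathcal{W}$, and close with the measurable maximum theorem (Aliprantis--Border Thm.~18.19). Where you go beyond the paper is in flagging two points that its proof passes over. First, the paper asserts that weak continuity of the kernel together with mere boundedness of $V$ makes $a\mapsto\mathbb{E}[V(s',z')\mid s,z,a]$ continuous; as you correctly observe, weak convergence only controls integrals of bounded \emph{continuous} test functions (or, one-sidedly, bounded u.s.c.\ ones via portmanteau), so the statement as written needs either $V$ u.s.c.\ --- which your remark that the Bellman backup preserves the u.s.c.\ class makes self-consistent for the intended application in Theorem~\ref{thm:existence} --- or a stronger (setwise/total-variation) continuity of the kernel. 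Second, you are right that with only u.s.c.\ dependence on $a$ the map $Q_V$ is not Carath\'eodory, so the joint-measurability hypothesis behind the measurable selection step requires the extra argument you sketch (monotone approximation by Carath\'eodory functions, or sublevel-set measurability via a countable dense subset of $\mathcal{W}$); the paper's appeal to ``measurable graph'' implicitly assumes this. Both patches are standard and correct, and your version of the proof is the more complete one.
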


\begin{proof}
By Lemma~\ref{lem:cont-reward}, $r^{\mathrm{net}}$ is bounded and upper semicontinuous in $a$. Weak continuity of the kernel in $a$ and boundedness of $V$ imply that $a\mapsto \mathbb{E}[\gamma V(s',z')\mid s,z,a]$ is continuous. Hence $a\mapsto Q_V(s,z,a)$ is upper semicontinuous. Berge's maximum theorem then yields nonemptiness and compactness of the argmax set; a measurable selection exists since $\mathcal{S}$ is standard Borel and the argmax correspondence has a measurable graph (see \cite[Thm.\ 18.19]{AliprantisBorder2006}).
\end{proof}

\begin{lemma}[Pinsker]\label{lem:pinsker}
For any two distributions $\mu,\nu$ on a measurable space, $\mathrm{TV}(\mu,\nu)^2 \le \tfrac{1}{2}\mathrm{KL}(\mu\|\nu)$.
\end{lemma}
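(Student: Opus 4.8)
\emph{Proof plan.} The inequality is trivial when $\mu$ is not absolutely continuous with respect to $\nu$, since then $\mathrm{KL}(\mu\|\nu)=+\infty$; so I would assume $\mu\ll\nu$, let $h=d\mu/d\nu$, and use the classical reduction to the two-point (Bernoulli) case. First I would recall the variational identity $\mathrm{TV}(\mu,\nu)=\sup_{A}|\mu(A)-\nu(A)|$ over measurable $A$, with the supremum attained at $A^\star=\{h>1\}$ (consistent with the $\tfrac12\int|d\mu-d\nu|$ convention fixed above). Writing $p:=\mu(A^\star)$ and $q:=\nu(A^\star)$, this gives $\mathrm{TV}(\mu,\nu)=|p-q|$.

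Second, I would show that coarsening to the partition $\{A^\star,(A^\star)^c\}$ does not increase the divergence, i.e.
\[
\mathrm{KL}(\mu\|\nu)\;\ge\;\mathrm{kl}(p\,\|\,q)\;:=\;p\log\tfrac{p}{q}+(1-p)\log\tfrac{1-p}{1-q}.
\]
To keep the appendix self-contained I would prove this directly by conditional Jensen: with $\phi(x)=x\log x$ convex and $\mathbb{E}_\nu[h\mid A^\star]=p/q$, $\mathbb{E}_\nu[h\mid (A^\star)^c]=(1-p)/(1-q)$, splitting $\int h\log h\,d\nu$ over $A^\star$ and its complement and applying Jensen to each piece yields the claim, with the conventions $0\log 0=0$ and the degenerate cases $q\in\{0,1\}$ handled separately (either $p=q$ and both sides vanish, or $\mathrm{KL}=+\infty$). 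This is precisely the data-processing monotonicity of KL under a deterministic channel, specialized to a two-cell output.

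Third, I would establish the scalar inequality $\mathrm{kl}(p\|q)\ge 2(p-q)^2$ for all $p,q\in[0,1]$. Fixing $q$ and setting $f(p):=\mathrm{kl}(p\|q)-2(p-q)^2$, one checks $f(q)=0$, $f'(q)=0$, and $f''(p)=\tfrac{1}{p(1-p)}-4\ge 0$ since $p(1-p)\le\tfrac14$; hence $f$ is convex with minimum $0$ at $p=q$, so $f\ge 0$. Chaining the three displays gives $\mathrm{TV}(\mu,\nu)^2=(p-q)^2\le\tfrac12\,\mathrm{kl}(p\|q)\le\tfrac12\,\mathrm{KL}(\mu\|\nu)$.

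The only step needing care is the second one: the reduction to Bernoulli must correctly handle a general (non-discrete) measurable space and the boundary behavior of the density, which the conditional-Jensen argument does cleanly; the remaining ingredients — the variational formula for $\mathrm{TV}$ and the one-variable convexity estimate — are routine. As an alternative that bypasses the partition step entirely, one could instead combine a pointwise density bound of the form $h\log h-h+1\ge c\,(h-1)^2/(h+2)$ with a single application of Cauchy–Schwarz (using $\int h\,d\nu=1$), but I would prefer the cleaner Bernoulli reduction.
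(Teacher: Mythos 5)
Your argument is correct and complete: the reduction via $A^\star=\{h>1\}$, the conditional-Jensen (data-processing) step $\mathrm{KL}(\mu\|\nu)\ge \mathrm{kl}(p\|q)$, and the scalar bound $\mathrm{kl}(p\|q)\ge 2(p-q)^2$ via $f''(p)=\tfrac{1}{p(1-p)}-4\ge 0$ all check out, and you correctly flag the degenerate cases $\mu\not\ll\nu$ and $q\in\{0,1\}$. The paper does not prove this lemma itself but simply cites Cover--Thomas (Thm.~11.6.1), whose proof is precisely the Bernoulli-reduction argument you give, so your proposal matches the intended route.
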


\begin{proof}
Standard; see \cite[Thm.\ 11.6.1]{CoverThomas2006}.
\end{proof}

\subsection{Proof of Theorem~\ref{thm:existence} (optimal stationary policy)}
\label{app:existence}

\begin{proof}[Proof of Theorem~\ref{thm:existence}]
Define the optimal Bellman operator on bounded measurable $V:\mathcal{S}\times\mathcal{Z}\to\mathbb{R}$:
\[
(TV)(s,z) := \sup_{a\in\mathcal{W}} \, \mathbb{E}\!\left[r^{\mathrm{net}}(s,z,a) + \gamma V(s',z') \,\middle|\, s,z,a\right].
\]
Let $V,W$ be two bounded functions. For any $(s,z)$,
\[
|(TV)(s,z)-(TW)(s,z)|
\le \sup_{a\in\mathcal{W}} \gamma\, \big|\mathbb{E}[V(s',z')-W(s',z')\mid s,z,a]\big|
\le \gamma \|V-W\|_\infty.
\]
Thus $T$ is a $\gamma$-contraction in the sup norm and admits a unique fixed point $V^\star$ by the Banach fixed–point theorem. By Lemma~\ref{lem:berge}, for each $(s,z)$ the supremum is attained by some $a^\star(s,z)\in\mathcal{W}$, and the selector can be chosen measurable; define $\pi^\star(\cdot\mid s,z)$ as the Dirac mass at $a^\star(s,z)$. Then the Bellman optimality equation
$V^\star=TV^\star$ together with the selection property implies that $\pi^\star$ is optimal (standard verification; see \cite[Thm.\ 6.2.10]{Puterman1994}). Determinism follows from the pointwise maximizer.
\end{proof}

\subsection{Proof of Lemma~\ref{lem:PDL} (performance difference)}
\label{app:pdl}

\begin{proof}[Proof of Lemma~\ref{lem:PDL}]
Fix any two stationary policies $\pi,\pi'$. Let $\mathcal{T}^\pi$ be the Bellman operator associated with $\pi$,
\[
(\mathcal{T}^\pi V)(s,z) := \mathbb{E}_{a\sim\pi}\big[r^{\mathrm{net}}(s,z,a) + \gamma V(s',z')\big].
\]
By definition $V^\pi$ satisfies $V^\pi=\mathcal{T}^\pi V^\pi$ and $A^\pi(s,z,a)=Q^\pi(s,z,a)-V^\pi(s,z)$. Then
\begingroup
\allowdisplaybreaks
\begin{align*}
J(\pi')-J(\pi)
&= \mathbb{E}_{(s_0,z_0)\sim \mu_0}\big[V^{\pi'}(s_0,z_0)-V^{\pi}(s_0,z_0)\big] \\
&= \sum_{t=0}^{\infty}\gamma^t \,\mathbb{E}\big[(\mathcal{T}^{\pi'}V^\pi-\mathcal{T}^{\pi}V^\pi)(s_t,z_t)\big] \\
&= \sum_{t=0}^{\infty}\gamma^t\, \mathbb{E}\big[\mathbb{E}_{a_t\sim\pi'} A^\pi(s_t,z_t,a_t)\big] \\
&= \frac{1}{1-\gamma}\, \mathbb{E}_{(s,z)\sim d^{\pi'},\,a\sim\pi'} \big[A^\pi(s,z,a)\big],
\end{align*}
\endgroup
where the second equality is the telescoping expansion (see, e.g., \cite{KakadeLangford2002}) and the last equality uses the definition of $d^{\pi'}$.
For the balanced objective, replace the initial distribution with the regime–reweighted initial distribution, which yields the same telescoping identity.
\end{proof}

\subsection{A distributional coupling bound}
\label{app:coupling}

\begin{lemma}[Discounted occupancy perturbation]\label{lem:occupancy}
Let $\alpha:=\sup_{s,z}\mathrm{TV}(\pi',\pi)(s,z)$. Then
\[
\big\|d^{\pi'}-d^\pi\big\|_{1} \;\le\; \frac{2\gamma}{1-\gamma}\,\alpha.
\]
Moreover, the following expectation–level recursion holds for any $t\ge 0$:
\[
\|\mu_{t+1}^{\pi'}-\mu_{t+1}^{\pi}\|_1
\;\le\; \|\mu_t^{\pi'}-\mu_t^\pi\|_1 + 2\,\mathbb{E}_{(s,z)\sim\mu_t^\pi}\!\left[\mathrm{TV}(\pi',\pi)(s,z)\right],
\]
where $\mu_t^\pi$ is the law of $(s_t,z_t)$ under $\pi$.
\end{lemma}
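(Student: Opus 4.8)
The plan is to prove the one-step marginal recursion first and then sum it against the geometric weights $(1-\gamma)\gamma^t$ that define the discounted occupancy measures. Write $P^\pi$ for the state--regime transition operator induced by a stationary policy $\pi$, i.e.\ $(P^\pi\nu)(\cdot)=\int \nu(ds,dz)\int \pi(da\mid s,z)\,P(\cdot\mid s,z,a)$, so that $\mu^\pi_{t+1}=P^\pi\mu^\pi_t$; this is the only place Assumption~\ref{ass:mdp}(v) (that $(s_t,z_t)$ is Markov under any stationary policy) enters. For the recursion I would insert the hybrid law $P^{\pi'}\mu^\pi_t$ and use the triangle inequality:
\[
\|\mu^{\pi'}_{t+1}-\mu^\pi_{t+1}\|_1 \;\le\; \|P^{\pi'}\mu^{\pi'}_t-P^{\pi'}\mu^\pi_t\|_1 \;+\; \|P^{\pi'}\mu^\pi_t-P^{\pi}\mu^\pi_t\|_1 .
\]
The first term is at most $\|\mu^{\pi'}_t-\mu^\pi_t\|_1$, since any Markov kernel is non-expansive in the total-variation norm on signed measures (via $\|P^{\pi'}f\|_\infty\le\|f\|_\infty$ and duality). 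For the second term, $P^{\pi'}\mu^\pi_t-P^\pi\mu^\pi_t=\int \mu^\pi_t(ds,dz)\int[\pi'(da\mid s,z)-\pi(da\mid s,z)]\,P(\cdot\mid s,z,a)$; pushing the norm inside (Fubini on the standard Borel space $\mathcal S\times\mathcal Z$, then $\|P(\cdot\mid s,z,a)\|_1=1$) gives $\int \mu^\pi_t(ds,dz)\int|\pi'(da\mid s,z)-\pi(da\mid s,z)|=2\,\mathbb{E}_{(s,z)\sim\mu^\pi_t}[\mathrm{TV}(\pi',\pi)(s,z)]$, which is exactly the claimed recursion.

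Next I would unroll it from the common initial law $\mu^{\pi'}_0=\mu^\pi_0=\mu_0$, so $\|\mu^{\pi'}_0-\mu^\pi_0\|_1=0$ and hence $\|\mu^{\pi'}_t-\mu^\pi_t\|_1\le 2\sum_{k=0}^{t-1}\mathbb{E}_{\mu^\pi_k}[\mathrm{TV}(\pi',\pi)]\le 2t\alpha$, using $\mathrm{TV}(\pi',\pi)(s,z)\le\alpha$ pointwise. Substituting into $d^\pi=(1-\gamma)\sum_{t\ge0}\gamma^t\mu^\pi_t$ and applying the triangle inequality termwise yields $\|d^{\pi'}-d^\pi\|_1\le (1-\gamma)\sum_{t\ge0}\gamma^t\cdot 2t\alpha = 2\alpha(1-\gamma)\sum_{t\ge0}t\gamma^t = 2\alpha(1-\gamma)\cdot\frac{\gamma}{(1-\gamma)^2}=\frac{2\gamma}{1-\gamma}\,\alpha$, which is the stated bound.

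I expect the only delicate points to be measure-theoretic bookkeeping rather than anything structural: justifying the interchange of integration and norm via Fubini, and fixing the convention that $\|\cdot\|_1$ is the total-variation norm of a signed measure with $\mathrm{TV}(\mu,\nu)=\tfrac12\|\mu-\nu\|_1$ (consistent with the definition in Appendix~\ref{app:proofs}), so that the non-expansiveness of $P^{\pi'}$ and the identity $\|P(\cdot\mid s,z,a)\|_1=1$ are literally applicable. As an alternative one could run both chains on a common probability space, letting them use the same action unless the two policies disagree at the realized $(s_t,z_t)$ --- an event of probability at most $\mathrm{TV}(\pi',\pi)(s_t,z_t)$ --- which gives the same recursion by a coupling argument. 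No further hypotheses beyond Assumption~\ref{ass:mdp}(v) are needed; in particular neither the cost structure nor $\beta$-mixing plays any role here.
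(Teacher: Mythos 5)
Your proposal is correct and follows essentially the same route as the paper's proof: establish the one-step recursion, unroll it to get $\|\mu_t^{\pi'}-\mu_t^{\pi}\|_1\le 2t\alpha$, and sum against the geometric weights using $\sum_{t\ge 0}t\gamma^t=\gamma/(1-\gamma)^2$. The only cosmetic difference is that you obtain the one-step bound by the hybrid-kernel decomposition and non-expansiveness of $P^{\pi'}$ in total variation, whereas the paper invokes a maximal coupling of the actions --- an equivalence you yourself note at the end.
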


\begin{proof}
For the one–step recursion, condition on $(s_t,z_t)$ and couple the actions by maximal coupling; the total variation distance after one action selection is at most $2\,\mathrm{TV}(\pi',\pi)(s_t,z_t)$. Taking expectation over $\mu_t^\pi$ and applying the triangle inequality yields the recursion. Summing the recursion gives
\[
\|\mu_t^{\pi'}-\mu_t^\pi\|_1 \le 2\sum_{k=0}^{t-1} \mathbb{E}_{\mu_k^\pi}[\mathrm{TV}(\pi',\pi)] \le 2 t \alpha.
\]
The discounted occupancy difference follows from
\[
\big\|d^{\pi'}-d^\pi\big\|_{1}
= (1-\gamma)\,\left\|\sum_{t\ge 0}\gamma^t(\mu_t^{\pi'}-\mu_t^\pi)\right\|_1
\le (1-\gamma)\sum_{t\ge 0}\gamma^t \|\mu_t^{\pi'}-\mu_t^\pi\|_1
\le (1-\gamma)\sum_{t\ge 0}\gamma^t (2t\alpha)
= \frac{2\gamma}{1-\gamma}\,\alpha,
\]
using $\sum_{t\ge 0} t\gamma^t = \gamma/(1-\gamma)^2$.
\end{proof}

\subsection{Proof of Theorem~\ref{thm:trpo} (trust‑region improvement)}
\label{app:trpo}

\begin{proof}[Proof of Theorem~\ref{thm:trpo}]
By Lemma~\ref{lem:PDL}, for any policies $\pi,\pi'$,
\[
J(\pi')-J(\pi)
= \frac{1}{1-\gamma}\,\mathbb{E}_{d^{\pi'}}\mathbb{E}_{\pi'}[A^\pi].
\]
Add and subtract $(1-\gamma)^{-1}\mathbb{E}_{d^{\pi}}\mathbb{E}_{\pi'}[A^\pi]$:
\[
J(\pi')-J(\pi) = \underbrace{\frac{1}{1-\gamma}\,\mathbb{E}_{d^{\pi}}\mathbb{E}_{\pi'}[A^\pi]}_{=:L_\pi(\pi')}\;
+\;\frac{1}{1-\gamma}\,\Big(\mathbb{E}_{d^{\pi'}} - \mathbb{E}_{d^{\pi}}\Big)\mathbb{E}_{\pi'}[A^\pi].
\]
Since $\mathbb{E}_{\pi}[A^\pi(\cdot,\cdot,a)]=0$ for all $(s,z)$, we have
\[
\big|\mathbb{E}_{\pi'}[A^\pi(s,z,\cdot)]\big|
= \big|\mathbb{E}_{\pi'}[A^\pi]-\mathbb{E}_{\pi}[A^\pi]\big|
\le 2\,\varepsilon_{\max}\,\mathrm{TV}(\pi',\pi)(s,z),
\]
where $\varepsilon_{\max} := \sup_{s,z,a} |A^\pi(s,z,a)|$. Hence,
\[
\left|\frac{1}{1-\gamma}\,\Big(\mathbb{E}_{d^{\pi'}} - \mathbb{E}_{d^{\pi}}\Big)\mathbb{E}_{\pi'}[A^\pi]\right|
\le \frac{1}{1-\gamma}\,\big\|d^{\pi'}-d^\pi\big\|_1 \cdot \sup_{s,z}\big|\mathbb{E}_{\pi'}[A^\pi]\big|
\le \frac{1}{1-\gamma}\cdot \frac{2\gamma}{1-\gamma}\,\alpha\cdot (2\varepsilon_{\max}\alpha)
= \frac{4\gamma}{(1-\gamma)^2}\,\varepsilon_{\max}\,\alpha^2,
\]
where $\alpha=\sup_{s,z}\mathrm{TV}(\pi',\pi)(s,z)$ and we used Lemma~\ref{lem:occupancy}. Therefore,
\[
J(\pi') \;\ge\; J(\pi) + L_\pi(\pi') - \frac{4\gamma}{(1-\gamma)^2}\,\varepsilon_{\max}\,\alpha^2.
\]
Finally, if $\sup_{s,z}\mathrm{KL}(\pi\|\pi')(s,z)\le \delta_{\max}$, then by Pinsker (Lemma~\ref{lem:pinsker}) we have $\alpha^2 \le \tfrac{1}{2}\delta_{\max}$ and hence
\[
J(\pi') \;\ge\; J(\pi) + L_\pi(\pi') - \frac{2\gamma}{(1-\gamma)^2}\,\varepsilon_{\max}\,\delta_{\max}.
\]
This yields the stated bound (statewise KL trust region). An expected–KL version follows from the same argument together with the expectation–level recursion in Lemma~\ref{lem:occupancy} and Jensen: if $\delta:=\mathbb{E}_{d^\pi}[\mathrm{KL}(\pi\|\pi')]$, then
\[
\mathbb{E}_{d^\pi}\!\big[\mathrm{TV}^2(\pi',\pi)\big] \;\le\; \tfrac{1}{2}\,\delta,
\]
and an identical calculation gives the remainder term $(2\gamma/(1-\gamma)^2)\varepsilon_{\max}\delta$.
\end{proof}

\subsection{Proof of Corollary~\ref{cor:ppo} (clipped PPO with trade penalty)}
\label{app:ppo}

\begin{proof}[Proof of Corollary~\ref{cor:ppo}]
Let $\widehat{A}^\pi$ satisfy $\|\widehat{A}^\pi - A^\pi\|_\infty\le \varepsilon$ with high probability (w.h.p.), e.g., via GAE with sufficiently many samples. The clipped surrogate plus KL penalty and a quadratic trade–space penalty reads (one epoch)
\[
\mathcal{L}(\theta)
= \sum_z \omega_z\, \mathbb{E}\Big[ \min\big(r_t(\theta)\,\widehat{A}_t,\, \mathrm{clip}(r_t(\theta),1\!\pm\!\epsilon)\,\widehat{A}_t\big)
- \beta\,\mathrm{KL}(\pi_{\theta_\mathrm{old}}\|\pi_\theta) - \lambda_\Delta\,\|\Delta w_\theta-\Delta w_{\theta_\mathrm{old}}\|_2^2\Big].
\]
Under a line search that enforces $\mathbb{E}_{d^\pi}[\mathrm{KL}(\pi\|\pi')]\le \delta$ and $\mathbb{E}\|\Delta w_{\pi'}-\Delta w_\pi\|_2^2\le \eta$, Theorem~\ref{thm:trpo} gives
\[
J(\pi') \ge J(\pi) + \frac{1}{1-\gamma}\,\mathbb{E}_{d^\pi,\pi'}[A^\pi] - \frac{2\gamma}{(1-\gamma)^2}\varepsilon_{\max}\delta.
\]
Replacing $A^\pi$ by $\widehat{A}^\pi$ introduces an additive error at most $\varepsilon/(1-\gamma)$. The trade penalty controls the change of $\Delta w$, and the Lipschitz property of costs implies an additional value drift bounded by $c_2\eta$ for some $c_2>0$ depending on the Lipschitz moduli of $C_z$ (Assumption~\ref{ass:mdp}). Combining terms yields the claim:
\[
J(\pi')-J(\pi) \;\gtrsim\; \frac{1}{1-\gamma}\,\mathbb{E}_{d^\pi,\pi'}[\widehat{A}^\pi] \;-\; \frac{2\gamma}{(1-\gamma)^2}\varepsilon_{\max}\delta \;-\; \frac{\varepsilon}{1-\gamma} \;-\; c_2\eta.
\]
\end{proof}

\subsection{Proof of Proposition~\ref{prop:TObound} (turnover bound)}
\label{app:turnover}

\begin{proof}[Proof of Proposition~\ref{prop:TObound}]
From \eqref{eq:reward} and $C_z(\Delta w)\ge \kappa_1(z)\|\Delta w\|_1$, dropping the nonnegative risk penalty,
\[
r_t^{\mathrm{net}} \;\le\; \bar r \;-\; \lambda_{\mathrm{tc}}\,\underline{\kappa}\,\|\Delta w_t\|_1,
\quad \underline{\kappa}:=\min_z \kappa_1(z)>0.
\]
Taking expectations and summing with discount,
\[
J(\pi) \;=\; \sum_{t\ge 0}\gamma^t\,\mathbb{E}[r_t^{\mathrm{net}}]
\;\le\; \frac{\bar r}{1-\gamma} \;-\; \lambda_{\mathrm{tc}}\underline{\kappa}\,\sum_{t\ge 0}\gamma^t\,\mathbb{E}\|\Delta w_t\|_1.
\]
Hence
\[
\sum_{t\ge 0}\gamma^t\,\mathbb{E}\|\Delta w_t\|_1 \;\le\; \frac{\bar r}{\lambda_{\mathrm{tc}}\underline{\kappa}}\cdot \frac{1}{1-\gamma}.
\]
By the Abelian limit theorem (Hardy–Littlewood) and Assumption~\ref{ass:mdp}(v), the Ces\`aro average
$\mathrm{TO}(\pi)=\lim_{T\to\infty}\frac{1}{T}\sum_{t<T}\mathbb{E}\|\Delta w_t\|_1$ exists and
\[
\mathrm{TO}(\pi) \;=\; \lim_{\gamma\uparrow 1}(1-\gamma)\sum_{t\ge 0}\gamma^t\,\mathbb{E}\|\Delta w_t\|_1
\;\le\; \frac{\bar r}{\lambda_{\mathrm{tc}}\underline{\kappa}}.
\]
\end{proof}

\subsection{Proof of Proposition~\ref{prop:band} (inaction band)}
\label{app:band}

\begin{proof}[Proof of Proposition~\ref{prop:band}]
Consider the 1D myopic improvement of the $Q$-function at state $(s,z)$ around the pre-trade weight $w_{t-1}$.
Define $\Delta:=a-w_{t-1}$. Let $g(\Delta):=Q^\pi(s,z,w_{t-1}+\Delta)-Q^\pi(s,z,w_{t-1})$ and suppose $g$ is twice differentiable with $g(0)=0$, $g'(0)=\theta$, and $g''(\Delta)\le H$ for all $\Delta$ (local curvature upper bound).
The one-step objective to maximize is
\[
q(\Delta) \;=\; g(\Delta) \;-\; \kappa_1|\Delta| \;-\; \tfrac{1}{2}\kappa_2 \Delta^2.
\]
By Taylor with remainder and the curvature bound,
$g(\Delta)\le \theta \Delta + \tfrac{H}{2}\Delta^2$ for all $\Delta$.
Thus for $\tilde{\kappa}:=\kappa_2-H\ge 0$,
\[
q(\Delta) \;\le\; \theta \Delta \;-\; \kappa_1|\Delta| \;-\; \tfrac{1}{2}\tilde{\kappa}\Delta^2 \;=:\; \varphi(\Delta).
\]
We claim that $\Delta^\star=0$ maximizes $\varphi$ whenever $|\theta|\le \kappa_1$.
Indeed, for any $\Delta\neq 0$,
\[
\varphi(\Delta)
\le |\theta|\,|\Delta| - \kappa_1|\Delta| - \tfrac{1}{2}\tilde{\kappa}\Delta^2
\le -(\kappa_1-|\theta|)\,|\Delta| \;<\; \varphi(0)=0.
\]
Therefore $\Delta^\star=0$ maximizes $\varphi$ and, since $q\le \varphi$ with equality at $\Delta=0$, also maximizes $q$.
In particular, if $Q^\pi$ is locally strongly concave with curvature parameter $H$ around $w_{t-1}$, the “dead-zone” condition $|\theta|\le \kappa_1$ translates (by the mean-value theorem) to $|w^\star(s,z)-w_{t-1}|\le \tau$ with $\tau\asymp \kappa_1/(\kappa_2+H)$, which gives the announced inaction band.
\end{proof}

\subsection{Proof of Theorem~\ref{thm:regime-adv} (value of regime conditioning)}
\label{app:regadv}

\begin{proof}[Proof of Theorem~\ref{thm:regime-adv}]
Let $a^\star_z(s)\in\arg\max_{a\in\mathcal{W}}Q^{\pi^\star}(s,z,a)$ be regime–specific near–optimal actions, and suppose Assumption~\ref{ass:separation} holds with separation $\Delta>0$ on a set $E\subset\mathcal{S}$ of positive $d^\pi$-measure for each $z$. Consider any unconditioned policy $\pi_u(a\mid s)$ and write its conditional mean action as $\bar a_u(s):=\mathbb{E}_{\pi_u}[a\mid s]$. For each $z$, Jensen and the curvature bound as in the previous proof imply (using $Q^{\pi^\star}$ twice differentiable in $a$ and upper curvature $H$)
\[
\mathbb{E}_{\pi_u}\!\left[Q^{\pi^\star}(s,z,a)\right]
\;\le\; Q^{\pi^\star}(s,z,\bar a_u(s))
\;\le\; Q^{\pi^\star}(s,z,a^\star_z(s)) - \frac{\tilde{\kappa}}{2}\,\big\|\bar a_u(s)-a^\star_z(s)\big\|_2^2,
\]
where $\tilde{\kappa}:=\kappa_2-H>0$ (choose $\kappa_2$ large enough; recall $\Gamma_z\succeq 0$ contributes to strong penalization in \eqref{eq:cost}).
Thus the per–state per–regime suboptimality is lower bounded by a quadratic in the action mismatch. Since $\pi_u$ is unconditioned, $\bar a_u(s)$ is common across regimes, hence for any $(s,z_1,z_2)$,
\[
\big\|\bar a_u(s)-a^\star_{z_1}(s)\big\|_2^2 + \big\|\bar a_u(s)-a^\star_{z_2}(s)\big\|_2^2
\;\ge\; \tfrac{1}{2}\,\big\|a^\star_{z_1}(s)-a^\star_{z_2}(s)\big\|_2^2
\;\ge\; \tfrac{1}{2}\,\Delta^2,
\]
by the parallelogram identity. Averaging over regimes with equal weights and over $s\in E$, the average one–step regret of any $\pi_u$ is at least $\frac{\tilde{\kappa}}{4}\Delta^2$ on $E$. Discounting over time and using the occupancy measure, we obtain
\[
J(\pi^\star)-J(\pi_u)\;\ge\; \frac{1}{1-\gamma}\cdot \frac{\tilde{\kappa}}{4}\,p_{\min}\,\Delta^2,
\]
where $p_{\min}:=\min_{z}\Pr(z)$ and we used that the balanced objective equally weights regimes. Since $\Pi_{\mathrm{cond}}$ can represent $\{a^\star_z\}$ within uniform error $\epsilon$, the same argument gives at most $O(\epsilon^2)$ regret for a conditioned policy, proving the advantage gap stated in Theorem~\ref{thm:regime-adv} (with an explicit $c=\tfrac{\tilde{\kappa}}{4}p_{\min}$).
\end{proof}

\subsection{Proof of Theorem~\ref{thm:robust-cost} (robustness to cost misspecification)}
\label{app:robust}

\begin{proof}[Proof of Theorem~\ref{thm:robust-cost}]
Let $\widehat{C}_z$ be the proxy cost and define $\delta:=\sup_{z,u}\big|\lambda_{\mathrm{tc}}C_z(u)-\lambda_{\mathrm{tc}}\widehat{C}_z(u)\big|$. Then per step the reward perturbation satisfies
\[
\big|r^{\mathrm{net}}_t(C)-r^{\mathrm{net}}_t(\widehat{C})\big|\;\le\;\delta.
\]
For any $\pi$,
\[
\big| J_C(\pi)-J_{\widehat{C}}(\pi)\big|
= \left|\sum_{t\ge 0}\gamma^t\,\mathbb{E}\big[r^{\mathrm{net}}_t(C)-r^{\mathrm{net}}_t(\widehat{C})\big]\right|
\le \frac{\delta}{1-\gamma}.
\]
Thus $J_{C}(\widehat{\pi}) \ge J_{\widehat{C}}(\widehat{\pi}) - \delta/(1-\gamma)$ for any $\widehat{\pi}$. In particular, letting $\widehat{\pi}$ be a maximizer of $J_{\widehat{C}}$ and $\pi^\star_C$ that of $J_C$,
\[
J_C(\pi^\star_C)-J_C(\widehat{\pi})
\le \big(J_{\widehat{C}}(\pi^\star_{\widehat{C}})+\tfrac{\delta}{1-\gamma}\big) - \big(J_{\widehat{C}}(\widehat{\pi})-\tfrac{\delta}{1-\gamma}\big)
= \frac{2\delta}{1-\gamma} + \big(J_{\widehat{C}}(\pi^\star_{\widehat{C}})-J_{\widehat{C}}(\widehat{\pi})\big).
\]
This is the desired bound.
\end{proof}

\subsection{Proof of Proposition~\ref{prop:cvar} (CVaR surrogate)}
\label{app:cvar}

\begin{proof}[Proof of Proposition~\ref{prop:cvar}]
Fix a batch of losses $\{L^{(i)}\}_{i=1}^N$. The Rockafellar–Uryasev surrogate \eqref{eq:cvar} is convex in $\eta$ and differentiable almost everywhere, with subgradient
\[
\partial_\eta\left[\eta + \frac{1}{(1-\alpha)N}\sum_{i=1}^N (L^{(i)}-\eta)_+\right]
= 1 - \frac{1}{(1-\alpha)N}\sum_{i=1}^N \mathbf{1}\{L^{(i)}>\eta\},
\]
which is monotone in $\eta$, hence a unique minimizer exists. For fixed $\eta$, the policy objective is a smooth function of $\theta$ (Assumption~\ref{ass:policy}); using a step size chosen by Armijo backtracking ensures descent and bounded iterates. Standard two–block alternating minimization arguments then imply that every limit point $(\theta^\star,\eta^\star)$ is a first–order stationary point of the joint problem (see, e.g., \cite[Prop.~2.7.1]{Bertsekas1999}).
\end{proof}

%Bibliography
\bibliographystyle{unsrt}  
\bibliography{references}

\end{document}